\documentclass[12pt,a4paper]{article}

\usepackage{tabularx}
\usepackage{comment}
\usepackage[T1]{fontenc}
\usepackage[utf8]{inputenc}
\usepackage{lmodern}
\usepackage[a4paper, top=2.5cm, bottom=2.5cm,
            left=2.5cm, right=2.5cm]{geometry}
\usepackage{amsmath, amssymb, amsthm}
\usepackage{etoolbox}
\AtEndEnvironment{remark}{\hfill$\lozenge$}
\usepackage{mathtools}
\usepackage{bbm}
\usepackage{bm}
\usepackage{microtype}
\usepackage{setspace}
\usepackage{booktabs}
\usepackage{xcolor}

\usepackage{graphicx}
\usepackage[colorlinks=true,
            linkcolor=blue!60!black,
            citecolor=blue!60!black,
            urlcolor=blue!60!black]{hyperref}
\usepackage{natbib}
\usepackage{parskip}

\bibpunct{(}{)}{;}{a}{,}{,}

\graphicspath{{./}}

\theoremstyle{plain}
\newtheorem{theorem}{Theorem}[section]
\newtheorem{lemma}[theorem]{Lemma}
\newtheorem{corollary}[theorem]{Corollary}

\theoremstyle{definition}
\newtheorem{assumption}{Assumption}

\theoremstyle{remark}
\newtheorem{remark}{Remark}[section]

\newcommand{\E}{\mathbb{E}}
\newcommand{\R}{\mathbb{R}}
\newcommand{\calY}{\mathcal{Y}}
\newcommand{\calX}{\mathcal{X}}

\newcommand{\linf}{\ell^{\infty}}

\newcommand{\G}{\mathbb{G}}

\renewcommand{\H}{\mathbb{H}}
\newcommand{\wto}{\rightsquigarrow}
\newcommand{\dto}{\xrightarrow{d}}
\newcommand{\pto}{\xrightarrow{p}}
\newcommand{\thetazerot}[1]{\theta^{0}_{1,#1}}
\newcommand{\hatthetazerot}[1]{\hat{\theta}^{0}_{1,#1}}

\newcommand{\Thetamat}{\boldsymbol{\Theta}}
\DeclareMathOperator{\Cov}{Cov}
\DeclareMathOperator*{\argmin}{arg\,min}
\DeclareMathOperator*{\argmax}{arg\,max}
\DeclareMathOperator{\vecop}{vec}
\title{\large\bfseries
  A Synthetic Control Approach to Conditional Distributional Treatment Effects}

\author{Dominik Wied%
  \thanks{University of Cologne, Albertus-Magnus-Platz,
    50923~Cologne, Germany. E-mail: \texttt{dwied@uni-koeln.de}.}}

\date{\today}

\begin{document}
%% ===============================================================

\maketitle
\thispagestyle{empty}

%% ---------------------------------------------------------------
\begin{abstract}
\begin{singlespace}
\noindent
This paper proposes a synthetic control (SC) framework for the estimation of conditional distributional treatment effects. Identification rests on a parallel trends condition formulated in the parameter space of the semiparametric distribution regression (DR) model, which keeps the counterfactual conditional distribution within the model class. The weights solve a least-squares problem subject to an adding-up constraint, yielding a closed-form estimator. We derive the asymptotic distribution of the counterfactual estimator, with DR estimation error and weight estimation error contributing at the same rate to the asymptotic variance. Moreover, we propose a supremum test for the null of no treatment effect, whose limit is the supremum of a Gaussian process. Simulations illustrate that conditioning on covariates can reveal effects being difficult to detect from the unconditional distribution alone. An application to the 1992 New Jersey minimum wage increase using CPS data finds effects concentrated in the minimum-wage corridor for low-education, low-experience workers.
\medskip\noindent

\textbf{Keywords:} Causal inference, functional delta method, distribution regression, minimum wage, parallel trends.
\medskip\noindent

\textbf{JEL Codes:} C12, C14, C21, C25, J31, J38.
\end{singlespace}
\end{abstract}

\newpage
\setcounter{page}{1}

%% ---------------------------------------------------------------
\section{Introduction}
\label{sec:intro}
%% ---------------------------------------------------------------

The synthetic control~(SC) method, introduced by \citet{AbadieGardeazabal2003} and
formalised in \citet{AbadieEtAl2010}, has become one of the most widely used tools
for the evaluation of policy interventions in settings with a small number of treated
units; \citet{Abadie2021} provides a comprehensive review. The classical estimator constructs a weighted average of control unit outcomes
that approximates the pre-treatment trajectory of the treated unit, and uses this
synthetic unit as the counterfactual in the post-treatment period. \citet{DoudchenkoImbens2016} relax the non-negativity constraint on weights, allowing the synthetic control to lie outside the convex hull of the donor pool, and show that this flexibility improves pre-treatment fit. \citet{BenMichaelEtAl2021} propose an augmented SC estimator that combines outcome-model bias correction with SC weighting. \citet{ChernozhukovEtAl2021} develop an exact, finite-sample valid inference approach for SC based on conformal prediction. \citet{FermanPinto2021} study the properties of SC when pre-treatment fit is imperfect. \citet{ArkhangelskEtAl2021} develop a synthetic difference-in-differences estimator that accommodates heterogeneous treatment effects. \citet{Chen2023} establishes a connection between synthetic control and online learning, showing that the SC estimator can be interpreted as an instance of follow-the-leader, which yields oracle inequalities for counterfactual predictions even under adversarial outcomes. While powerful for estimating average treatment effects at the aggregate level, all these methods provide no information about distributional consequences, a critical limitation for policy questions involving inequality, minimum wages, or tax reforms.

The broader literature on distributional treatment effects includes the counterfactual wage decompositions of \citet{MachadoMata2005} and \citet{Melly2005} and \citet{ChernozhFVMelly2013}, the unconditional quantile regressions of \citet{FirpoFortinLemieux2009}, and the distributional DiD approaches of \citet{ChernozhFVMelly2013} and \citet{Fernandezval2026}. None of these papers considers a synthetic control approach.

A first important extension was proposed by \citet{Gunsilius2023}, who replaces scalar
outcomes with quantile functions, thereby allowing inference on heterogeneous treatment
effects across the distribution. His estimator matches unconditional quantile functions
of the control units to that of the treatment unit in the pre-treatment period, using
a constrained quantile-on-quantile regression. While this distributional synthetic control
(DSC) method is a substantial advance, it is inherently unconditional: It cannot
accommodate individual-level covariates such as education, experience or demographics
that are central to labour economics and related fields. 

The present paper fills this gap by proposing a SC estimator for \emph{conditional}
distribution functions. \citet{ChenFeng2026} extend the DSC framework to group-level heterogeneity, but their heterogeneity is driven by unobservable group membership rather than observable covariates, leaving the gap addressed here open. Conditioning on covariates is not simply a refinement of the unconditional approach: it changes the identification problem in a fundamental way. When the outcome distribution depends on individual characteristics, as wages depend on education and experience, the counterfactual object of interest is a conditional distribution function, and any identification assumption must be formulated at the level of the model that generates it.

We work within the semiparametric DR framework of
\citet{ForesiPeracchi1995}, where the conditional distribution function of an outcome $Y$
given covariates $X \in \R^p$ is modeled as $F(y\mid x)=\Lambda(x'\theta(y))$ for a known link
$\Lambda$ and unknown parameter function $\theta:\calY\to\R^p$. This model has appealing
properties in applications: it handles mass points, irregularities, and multimodality
naturally, and estimation reduces to a sequence of binary outcome regressions indexed by $y$. The DR literature includes \citet{KneibEtAl2023} and \citet{Klein2024} on flexible extensions, \citet{spady:2025} on semiparametric efficiency, and \citet{Wied2024} on instrumental variable estimation in the DR framework; \citet{RotheWied2013} derive the asymptotic theory used here; and \citet{BiewenErhardt2025} apply distribution regression to minimum wage analysis. To the best of our knowledge, the combination of synthetic control methods with semiparametric distribution regression has not been studied in the existing literature.

Our identification assumption \emph{Parallel Trends in Parameters}~(PTP) is
formulated directly in the DR parameter space: the counterfactual parameter function of the
treated group evolves from the pre-treatment period to the post-treatment period along the same
weighted combination of trends observed in the control groups, requiring the weights to sum to
one. The economic rationale is that the DR parameters (e.g. quantile-specific returns to education and experience in a Mincer framework) are driven by a small number of common aggregate shocks (e.g. skill-biased technological change, business-cycle fluctuations, globalisation) whose differential local impact is captured by state-specific factor loadings. PTP formalises the condition that the factor loadings of a treated unit can be approximated by a weighted combination of donor states' loadings, which is the direct functional analogue of the latent-factor rationale underlying classical SC methods \citep{AbadieEtAl2010}. This approximation is natural in the DR \emph{parameter} space rather than at the CDF level, because the parameter function is linear in the common factors whereas the CDF is not. This formulation has also two other important advantages over a parallel trends condition on the CDF level.
First, it ensures that the counterfactual conditional distribution function remains within the DR
model class, so that all estimators have a direct parametric interpretation. Second, it allows us
to formulate the weight estimation problem as a finite-dimensional quadratic program (QP) at each
value of $y$, or equivalently as a single infinite-dimensional QP over the function space
$\linf(\calY)^p$. Dropping the traditional non-negativity constraint yields a closed-form weight
estimator, avoids interior-solution requirements, and transparently accounts for potentially
negative weights that arise when some control units lie outside the convex hull of the treated unit.

One contribution of the paper is the asymptotic analysis of the proposed estimator for the case of fixed number of groups $J$ and increasing individual observations within each group. This framework fits to micro-data applications such as the Current Population Survey (CPS) for US states, which we consider, as the number of states is small relative to the within-group sample size $n$. Statistical precision derives
from this sample size, not from the number of time periods $T$. In fact, one pre-period is sufficient for obtaining consistent parameter estimators, whereas more pre-periods increase the precision of the weight estimators. An important result is the joint weak convergence of the pointwise counterfactual CDF differences to a Gaussian vector, where both DR estimation error and weight estimation error contribute at the same $\sqrt{n}$ rate. Based on this, we show asymptotic normality of the integrated squared treatment effect estimator when the true effect is non-zero.
For testing the null hypothesis of no treatment effect, we propose a supremum statistic over the pointwise CDF differences, applicable either on the whole support of $Y$ or a subset. Its null distribution is the supremum of a mean-zero Gaussian process. This supremum statistic has a non-degenerate null distribution, which is approximated by a Gaussian process simulation, and power growing to one with $n$. A plug-in confidence interval for the integrated squared effect is reported in a second step, conditional on rejection. We characterize under which types of deviation from PTP the results still hold.

If multiple pre-periods are available, it is possible to perform pre-trend diagnostics. We use them to perform causal inference in an empirical application to the 1992 New Jersey minimum wage increase. In fact, we see that the DR-SC approach reveals a clear heterogeneity pattern that is invisible in aggregate analyses. The minimum wage effect is sharply concentrated in
the corresponding corridor for low-education and low-experience workers, the group most directly affected by the
policy. For other groups the effects are at most marginal, in particular for high-education/experience workers.

The paper proceeds as follows. Section~2 introduces the framework for general $T_0\geq 1$ pre-treatment and $T_1\geq 1$ post-treatment periods and discusses point as well as confidence estimation of the treatment effect. Moreover, the supremum test and the pre-trend test for the case $T_0\geq 2$ are presented. Section~3 derives the supporting asymptotic theory. Section~4 presents simulation evidence, which highlights the advantages of the conditional over an unconditional approach. Section~5 contains the empirical application, Section~6 concludes.

\section{Framework and Estimation}
\label{sec:framework}
%% ---------------------------------------------------------------
\subsection{Framework}
Consider $J+1$ groups $i=1,\ldots,J+1$ at $T=T_0+T_1$ time points, where $t=1,\ldots,T_0$ are pre-treatment periods and $t=T_0+1,\ldots,T_0+T_1$ are post-treatment periods. Group $i=1$ is
the treatment group; groups $i=2,\ldots,J+1$ form the donor pool of potential control groups.
In all pre-treatment periods none of the groups is treated. Only group $i=1$ receives the treatment from period $T_0+1$ onward.
Within each group-time cell $(i,t)$ we observe an i.i.d.\ sample of $n_{it}$ individuals with outcome
$Y\in\calY\subseteq\R$ and a $p$-dimensional covariate vector $X\in\calX\subseteq\R^p$. Typical applications
in labour economics have $J$ small (e.g., $J+1=10$ to 50 federal states or countries) and $n_{it}$ large
(e.g., several thousand individuals per group-time cell). We fix $J$ and let $n:=\sum_{i,t}n_{it}\to\infty$,
assuming $n_{it}/n\to r_{it}\in(0,\infty)$.

For each group $i$ and time period $t$, the conditional distribution function of $Y$ given $X=x$ satisfies
\begin{equation}
  \label{eq:DR}
  F_{it}(y\mid x) = \Lambda\bigl(x'\theta_{it}(y)\bigr),
  \quad y\in\calY,\;x\in\calX,
\end{equation}
where $\Lambda:\R\to(0,1)$ is a known strictly increasing link function (such as the standard normal CDF
for the probit link, or the logistic CDF for the logit link) with derivative $\lambda$ and $\theta_{it}:\calY\to\R^p$ is an unknown
c\`adl\`ag parameter function. For fixed $y$, $\theta_{it}(y)$ is the coefficient vector in the binary outcome
regression of $\mathbf{1}\{Y\leq y\}$ on $X$. Numerical sensitivity analyses in \citet{dette2025} confirm that the choice of the link function has a negligible impact on estimation results.

\begin{remark}[Monotonicity]
\label{rem:monotonicity}
The model ensures that $y\mapsto F_{it}(y\mid x)$ is a valid distribution function (monotone in $y$, taking
values in $(0, 1)$) if $y\mapsto x'\theta_{it}(y)$ is non-decreasing for all $x$ in the support. This
monotonicity can be imposed ex post via rearrangement \citep{Chernozhukov2010} or isotonic regression \citep{Wied2024}.
\end{remark}

We are interested in the treatment effect on a region of the outcome distribution $\calY_0\subseteq\calY$, measured via the integrated squared
discrepancy between the observed and counterfactual conditional distributions:
\begin{equation}
  \label{eq:effect}
  f_t(x,\calY_0) := \int_{\calY_0}
  \bigl[F_{1,t}(y\mid x)-F^{0}_{1,t}(y\mid x)\bigr]^2\,dy
  =: \int_{\calY_0}\Delta_t(y\mid x)^2\,dy, \quad x\in\calX,
\end{equation}
where $F^{0}_{1,t}(y\mid x)$ is the counterfactual conditional distribution of the treated group in
post-treatment period $t$, and
\begin{equation}
  \label{eq:Delta}
  \Delta_t(y\mid x) := F_{1,t}(y\mid x) - F^{0}_{1,t}(y\mid x)
\end{equation}
is the pointwise CDF difference. The integrated
discrepancy $f_t(x,\calY_0)$ captures both the magnitude and the shape of the treatment effect on the conditional
distribution at covariate value $x$ over the region $\calY_0$. It equals zero if and only if $\Delta_t(y\mid x)=0$ for almost all $y\in\calY_0$.
Setting $\calY_0=\calY$ recovers the full integrated effect; a proper subset $\calY_0\subsetneq\calY$ focuses attention on a specific region of the distribution, such as the minimum-wage corridor $[\mathrm{MW}_{\mathrm{old}},\mathrm{MW}_{\mathrm{new}}+\varepsilon]$.
One may also consider the doubly integrated measure $\int_{\calX}f_t(x,\calY_0)dF_{X}(x)$, which averages over the
covariate distribution. If not denoted otherwise, $f_t(x,\calY_0) = f_t(x)$ for notational simplicity in the following.

The fundamental identification challenge is that $F_{11}^{0}(y|x)$ is not observed. We identify it via the following assumption, which has to hold on the set of interest $\calY_0$.

\begin{assumption}[Parallel Trends in Parameters, PTP]\label{ass:ptp}
There exist weights $w_2,\ldots,w_{J+1}\in\R$ with $\sum_{i=2}^{J+1}w_i=1$ such that for
all $y\in\calY_0$ and all post-treatment periods $t=T_0+1,\ldots,T_0+T_1$,
\[
  \thetazerot{t}(y)-\theta_{1,T_0}(y)
  =\sum_{i=2}^{J+1}w_i\bigl[\theta_{i,t}(y)-\theta_{i,T_0}(y)\bigr],
\]
where $\theta_{1,t}^{0}(y)$ is the DR parameter function of the counterfactual $F_{1,t}^{0}(y\mid \cdot)=\Lambda(\cdot'\theta_{1,t}^{0}(y))$ for each post-treatment period $t$.
\end{assumption}

Assumption~PTP states that the trend in the treated group's DR parameter function from the pre- to the
post-treatment period would have equalled a weighted combination of the control groups' trends, absent
treatment. This is a direct analogue of the classical parallel trends assumption in the difference-in-differences
(DiD) literature, formulated in the parameter space of the DR model rather than on the level of the outcome
or its conditional mean. We impose only the adding-up constraint; negative weights arise when some control
units lie outside the convex hull of the treated unit. Differently to the case of negative weights in regression-based DiD \citep{CallawayLi2019}, this is not a problem of model specification, but a particular feature, which yields flexibility. \citet{DoudchenkoImbens2016} make a similar argument in the SC context and show that allowing negative weights improves pre-treatment fit.\\

\begin{remark}[Economic motivation of PTP]\label{remark:economic}
Assumption~PTP can be motivated by a
latent factor model of the distribution regression parameters. Suppose
that, absent treatment, the conditional wage distribution in each state is shaped
by a small number of common macroeconomic factors such as inflation,
technological change, or business-cycle conditions. These forces affect all states, but with different intensities depending on each state’s industry composition, labour market structure, and demographic profile.

Formally, assume that the
untreated DR parameter function admits the representation
\[
\theta^{0}_{it}(y) = \mu_{i}(y) + \Lambda_{i}(y)' F_{t} + u_{it}(y),
\]
where $F_{t} \in \mathbb{R}^{r}$ is a vector of common factors, $\Lambda_{i}(y)$
is a vector of state-specific factor loadings, $\mu_{i}(y)$ captures
time-invariant heterogeneity, and $u_{it}(y)$ is an idiosyncratic disturbance. If
the factor loadings of the treated unit can be approximated by a weighted average of the loadings of the donor states,
$\Lambda_{1}(y) \approx \sum_{i=2}^{J+1} w_{i}\Lambda_{i}(y)$ with
$\sum_{i=2}^{J+1} w_{i} = 1$, then the counterfactual evolution of the treated unit
satisfies approximately
\[
\thetazerot{t}(y) - \theta_{1,T_0}(y) \approx
\sum_{i=2}^{J+1} w_{i}\bigl(\theta_{i,t}(y) - \theta_{i,T_0}(y)\bigr).
\]
Thus PTP is the direct functional analogue of the latent-factor rationale
underlying classical synthetic control methods, see \citet{Abadie2021}.

This latent-factor structure has empirical support in the labour economics literature. In a Mincer framework, the DR parameters $\theta_{it}(y)$ collect quantile-specific returns to education and experience. The time variation in these returns (documented across US states by \citet{KatzMurphy1992} and \citet{CardLemieux2001}) is well explained by a small number of common demand shocks: Skill-biased technological change raises returns to education uniformly across states but with different intensities depending on the local industry mix, aggregate business-cycle fluctuations compress or expand the lower tail of the wage distribution through their differential impact on low-education employment \citep{AutorManningSmith2016} and trade exposure affects manufacturing-intensive states more strongly \citep{AutorDornHanson2013}. Each of these forces maps naturally to a common factor $F_t$, with state-specific industry composition determining the loading $\Lambda_i(y)$.

An advantage of formulating PTP in the DR parameter space (rather than directly at the CDF level) is that the factor structure is \emph{linear} in $\theta_{it}(y)$, whereas the corresponding CDF $F_{it}(y|x)=\Lambda(x'\theta_{it}(y))$ is a nonlinear. This linearity delivers constant weights $w$ across the full distribution, which is economically natural (the same macroeconomic forces act on all quantiles of the conditional distribution).
\end{remark}

\begin{remark}[Pre-treatment balance]
\label{rem:balance}
A natural sufficient condition for PTP is that (i)~the pre-treatment parameter functions satisfy
$\theta_{1,t}(y)=\sum_{i=2}^{J+1}w_i\theta_{i,t}(y)$ for all $y$ and all $t=1,\ldots,T_0$ (perfect pre-treatment balance), and
(ii)~the control groups would have followed the same weighted trend as the treatment group in the absence
of treatment. Under perfect pre-treatment balance, Assumption~PTP simplifies to $\thetazerot{t}(y)=\sum_{i=2}^{J+1} w_i\theta_{i,t}(y)$ for each post-period $t$, which means the counterfactual DR parameter is simply a weighted average of the period-$t$ DR parameters of the control groups. We will use this simplification as the basis for estimation.
\end{remark}

\begin{remark}[Comparison to CDF-level parallel trends]
\label{rem:cdf_comparison}
One might alternatively formulate parallel trends directly on the CDF level:
$F_{1,t}^{0}(y\mid x)=\sum_{i=2}^{J+1}w_i F_{i,t}(y\mid x)$ for all $y,x$, and each post-period $t$. Under Assumption PTP and the
DR model, this holds approximately. A second-order Taylor expansion yields the discrepancy:
\[
  \Lambda\Bigl(x'\sum_{i=2}^{J+1}w_i\theta_{i,t}(y)\Bigr)-\sum_{i=2}^{J+1}w_i\Lambda\bigl(x'\theta_{i,t}(y)\bigr)
  \approx-\frac{1}{2}\lambda'\bigl(x'\overline{\theta}_t(y)\bigr)\cdot \mathrm{Var}_{w}\bigl(x'\theta_{i,t}(y)\bigr),
\]
where $\overline{\theta}_t(y)=\sum_i w_i\theta_{i,t}(y)$ and $\lambda=\Lambda'$. The approximation error is thus
proportional to the variance of the DR index $x'\theta_{i,t}(y)$ across control groups, small when control
groups are similar and large when they are heterogeneous.
\end{remark}

\subsection{Point Estimation}
For each cell $(i,t)$ and $y$ on a grid $\calY_m=\{y_1,\ldots,y_m\}\subset\calY_0$, the DR
parameter $\theta_{it}(y)$ is estimated by maximizing the log-likelihood of the binary outcome model. For a fixed $y\in\mathcal{Y}$, the estimator solves:
\begin{equation}
  \label{eq:DR_estim}
  \hat{\theta}_{it}(y)=\argmax_{\theta\in\R^{p}}\frac{1}{n_{it}}\sum_{k=1}^{n_{it}}\bigl[\mathbf{1}\{Y_{itk}\le y\}\log \Lambda(X_{itk}'\theta)+\mathbf{1}\{Y_{itk}>y\}\log(1-\Lambda(X_{itk}'\theta))\bigr].
\end{equation}
This is a standard probit or logit regression of $\mathbf{1}\{Y\le y\}$ on $X$, estimated separately for each $y$ on a finite grid $\mathcal{Y}_{m}=\{y_{1},...,y_{m}\}\subset\mathcal{Y}_0$ (e.g. the 0.05,0.1,\ldots quantiles of the distribution of $Y$).

The weights $w=(w_{2},...,w_{J+1})^{\prime}$ are estimated by minimizing the pre-treatment discrepancy averaged over all $T_0$ pre-treatment periods:
\begin{equation}
\hat{w}=\argmin_{\substack{w\in\R^J\\\mathbf{1}'w=1}} \frac{1}{T_0}\sum_{t=1}^{T_0}\int_{\calY_0}\left\|\hat{\theta}_{1t}(y)-\sum_{i=2}^{J+1}w_{i}\hat{\theta}_{it}(y)\right\|^{2}dy.
\end{equation}
By averaging over pre-treatment periods and integrating the quadratic discrepancy over the support of the outcome $y$, we move from a point-wise to a distribution-wide weight estimation. This ensures that the weights $w$ are constant across the entire conditional distribution and across time. This global objective function acts as a natural stabilizer, effectively reducing the dimensionality of the optimisation problem and mitigating the risk of over-fitting that typically arises when weighting high-dimensional objects.

In practice, the integral over $\mathcal{Y}_0$ is replaced by a sum over the grid $\mathcal{Y}_{m}$
\begin{equation}
  \label{eq:QP}
  \hat w = \argmin_{\substack{w\in\R^J\\\mathbf{1}'w=1}}
  \frac{1}{T_0\cdot m}\sum_{t=1}^{T_0}\sum_{l=1}^m
  \Bigl\|\hat\theta_{1t}(y_l)
  -\sum_{i=2}^{J+1}w_i\hat\theta_{it}(y_l)\Bigr\|^2.
\end{equation}
This is a quadratic program in $w$ with linear constraints and a positive semidefinite objective.
Let $\hat G\in\R^{J\times J}$ denote the time-averaged Gram matrix with entries
\[
\hat G_{kl}=\frac{1}{T_0\cdot m}\sum_{t=1}^{T_0}\sum_{l'=1}^m \hat\theta_{kt}(y_{l'})'\hat\theta_{lt}(y_{l'})
\]
and $\hat c\in\R^J$ the vector with entries
\[
\hat c_k=\frac{1}{T_0\cdot m}\sum_{t=1}^{T_0}\sum_{l'=1}^m\hat\theta_{1t}(y_{l'})'\hat\theta_{kt}(y_{l'}).
\]
Solving the Lagrangian of~\eqref{eq:QP} yields the closed-form solution~\eqref{eq:wexplicit}, unchanged in form.

\begin{equation}
  \label{eq:wexplicit}
  \hat w = \hat G^{-1}\hat c
  -\hat G^{-1}\mathbf{1}
  \,\frac{\mathbf{1}'\hat G^{-1}\hat c-1}{\mathbf{1}'\hat G^{-1}\mathbf{1}}.
\end{equation}

\begin{remark}[Ridge regularisation]\label{rem:ridge}
When the condition number $\kappa(\hat G)$ is large, the closed-form solution~\eqref{eq:wexplicit} may be numerically unstable. A practical remedy is to replace $\hat G$ by the ridge-regularized matrix $\hat G_\lambda = \hat G + \lambda I_J$ for some $\lambda>0$, yielding
\begin{equation}
  \label{eq:wridge}
  \hat w_\lambda = \hat G_\lambda^{-1}\hat c
  -\hat G_\lambda^{-1}\mathbf{1}
  \,\frac{\mathbf{1}'\hat G_\lambda^{-1}\hat c-1}{\mathbf{1}'\hat G_\lambda^{-1}\mathbf{1}}.
\end{equation}
The regularisation introduces a bias of order $O(\lambda\|w^*\|)$ in the weights, which shrinks $\hat w_\lambda$ toward $J^{-1}\mathbf{1}$. For $\lambda\to 0$ the solution~\eqref{eq:wridge} converges to~\eqref{eq:wexplicit} whenever $\hat G$ is invertible. In practice, $\lambda$ should be chosen small enough that $\lambda/\sigma_{\min}(\hat G)\ll 1$, so that the numerical benefit outweighs the shrinkage bias. The theoretical results of Section~\ref{sec:asymptotics} continue to hold for $\hat w_\lambda$ whenever $\lambda = o(n^{-1/2})$.
\end{remark}

The counterfactual DR parameter for post-treatment period $t\in\{T_0+1,\ldots,T_0+T_1\}$ is estimated as:
\begin{equation}
\hat{\theta}_{1,t}^{0}(y)=\sum_{i=2}^{J+1}\hat{w}_{i}\hat{\theta}_{i,t}(y), \label{eq:counter_param}
\end{equation}
which exploits the pre-treatment balance enforced by the weight estimator~\eqref{eq:QP} (see Remark~\ref{rem:balance}).
The counterfactual conditional distribution function is then $\hat{F}_{1,t}^{0}(y|x)=\Lambda(x^{\prime}\hat{\theta}_{1,t}^{0}(y))$
and the treatment effect estimator for post-period $t$ is:
\begin{equation*}
\hat{f}_t(x)=\int_{\calY_0}[\Lambda(x^{\prime}\hat{\theta}_{1,t}(y))-\Lambda(x^{\prime}\hat{\theta}_{1,t}^{0}(y))]^{2}dy, \label{eq:f_est}
\end{equation*}
approximated in practice by a Riemann sum over $\mathcal{Y}_{m}$.

\begin{remark}[Direct PTP estimator]\label{rem:direct_ptp}
An alternative estimator uses Assumption~1 directly,
\[
  \tilde\theta^{0}_{1,t}(y)
  \;=\;
  \hat\theta_{1,T_0}(y)
  +\sum_{i=2}^{J+1}\hat w_{i}
  \bigl(\hat\theta_{i,t}(y)-\hat\theta_{i,T_0}(y)\bigr),
\]
and is consistent under PTP without requiring perfect pre-treatment balance.
However, it incurs additional variance relative to $\hatthetazerot{t}$
in~\eqref{eq:counter_param}, because pre-period estimation errors in
$\hat\theta_{1,T_0}$ and $\hat\theta_{i,T_0}$ propagate into the counterfactual.
Under perfect balance the two estimators are asymptotically equivalent, so
$\hatthetazerot{t}$ is preferred on efficiency grounds. Corollary~\ref{cor:orthogonality}
identifies the weaker condition under which $\hat f_t(x)$ remains a consistent estimator of $f_t(x)$ even when pre-treatment balance is imperfect.
\end{remark}

\subsection{Confidence interval for $f_t(x)$}
Since $f_t(x)\ge0$ by construction, we report a one-sided confidence interval. When the treatment
effect is non-zero, $f_t(x)>0$, the asymptotic normality of $\hat f_t(x)$
(established in Theorem~\ref{thm:main}(c) in Section~\ref{sec:asymptotics}) yields,
at significance level $\alpha$, the one-sided $(1-\alpha)$ confidence interval
\begin{equation}
  \label{eq:CI}
  CI(t,x) := \left[\,\max\!\Bigl\{0,\ \hat{f}_t(x)-u_{1-\alpha}\,\frac{\hat\sigma_t(x)}{\sqrt{n}}\Bigr\},\ \infty\right),
\end{equation}
where $u_{1-\alpha}$ is the $(1-\alpha)$-quantile of the $N(0,1)$-distribution and
\begin{eqnarray*}
  \label{eq:sigmahat}
  \hat{\sigma}^{2}_t(x) &=& \frac{4}{m^2} \sum_{l=1}^{m} \sum_{l'=1}^{m} \hat{\delta}_{x,t}(y_{l}) \hat{\delta}_{x,t}(y_{l'}) \cdot \hat{K}_t(y_l, y_{l'}) \\
  \hat\delta_{x,t}(y_l)&=&\Lambda(x'\hat\theta_{1,t}(y_l))-\Lambda(x'\hatthetazerot{t}(y_l))
\end{eqnarray*}
and $\hat K_t$ is the covariance kernel for post-period $t$:
\begin{eqnarray}\label{sandwich}
\hat{K}_t(y_l, y_{l'}) &=& \lambda(x'\hat{\theta}_{1,t}(y_l)) \lambda(x'\hat{\theta}_{1,t}(y_{l'})) \cdot \nonumber \\
&& \frac{n}{n_{1,t}^2} \sum_{k=1}^{n_{1,t}} \psi_{1,t,k}(y_l) \psi_{1,t,k}(y_{l'}) \cdot \left[ x'\hat{\mathcal{I}}_{1,t}(y_l)^{-1} X_{1,t,k} \right] \cdot \left[ X_{1,t,k}' \hat{\mathcal{I}}_{1,t}(y_{l'})^{-1} x \right] \nonumber \\
&+& \lambda(x'\hatthetazerot{t}(y_l)) \lambda(x'\hatthetazerot{t}(y_{l'})) \cdot \Biggl( \\
&& \sum_{i=2}^{J+1} \hat{w}_i^2 \left( \frac{n}{n_{i,t}^2} \sum_{k=1}^{n_{i,t}} \psi_{i,t,k}(y_l) \psi_{i,t,k}(y_{l'}) \cdot \left[ x'\hat{\mathcal{I}}_{i,t}(y_l)^{-1} X_{i,t,k} \right] \cdot \left[ X_{i,t,k}' \hat{\mathcal{I}}_{i,t}(y_{l'})^{-1} x \right] \right) \nonumber \\
&+& \hat{\tilde\Theta}_{x,t}(y_l)'\,\hat{V}_w\,\hat{\tilde\Theta}_{x,t}(y_{l'}) \Biggr) \nonumber.
\end{eqnarray}
The generalized residuals (scores) are $$\psi_{i,t,k}(y) = \dfrac{\lambda(X_{i,t,k}'\hat\theta_{i,t}(y))}{\Lambda(X_{i,t,k}'\hat\theta_{i,t}(y))\,(1-\Lambda(X_{i,t,k}'\hat\theta_{i,t}(y)))}\bigl(\mathbf{1}\{Y_{i,t,k} \le y\} - \Lambda(X_{i,t,k}'\hat{\theta}_{i,t}(y))\bigr),$$ the standardized score residual of the binary log-likelihood~\eqref{eq:DR_estim}.
The vector $\hat{\tilde\Theta}_{x,t}(y_l)=\bigl(x'\hat\theta_{2,t}(y_l),\ldots,x'\hat\theta_{J+1,t}(y_l)\bigr)'\in\mathbb{R}^J$
collects the scalar projections of the period-$t$ donor parameters onto~$x$,
and $\hat{V}_w$ is an estimator for $V_w$
from Theorem~\ref{thm:weights}.
Moreover, $$\widehat{\mathcal{I}}_{i,t}(y)=\frac{1}{n_{i,t}}\sum_k \dfrac{\lambda(X_{i,t,k}'\hat\theta_{i,t}(y))^2}{\Lambda(X_{i,t,k}'\hat\theta_{i,t}(y))\,(1-\Lambda(X_{i,t,k}'\hat\theta_{i,t}(y)))}
X_{i,t,k}X_{i,t,k}'$$ is the empirical Fisher information at $y$ (the sample analogue of $\mathcal{I}_{i,t}$ in Lemma~\ref{lem:DRconv}).
The first term in~\eqref{sandwich} captures DR estimation uncertainty
for the treated group in period $t$; the second captures uncertainty from the control groups;
the third captures weight estimation uncertainty, corresponding to the
$\Thetamat_t(y)'V_w\Thetamat_t(y')$ component of~\eqref{eq:cov0}. We estimate $V_w$ by the following plug-in estimator. With $\hat P=\partial w^*/\partial c$ at $\hat G$ (see~\eqref{eq:jacobian_w}) and $\hat\Theta^{(s)}(y)=(\hat\theta_{2,s}(y),\ldots,\hat\theta_{J+1,s}(y))'\in\R^{J\times p}$, define the cell loadings $\hat B_{i,s}(y)\in\R^{J\times p}$ (under Assumption~\ref{ass:span})
\begin{equation}\label{eq:Bload}
\hat B_{1,s}(y)=\hat P\,\hat\Theta^{(s)}(y),\qquad
\hat B_{i,s}(y)=-\hat w_i\,\hat P\,\hat\Theta^{(s)}(y)\quad(i=2,\ldots,J+1),
\end{equation}
and the plug-in estimator is the score-outer-product sum
\begin{equation}\label{eq:Vwplugin}
\hat V_w=\frac{1}{T_0^{2}m^{2}}\sum_{s=1}^{T_0}\sum_{i=1}^{J+1}\frac{n}{n_{i,s}^{2}}\sum_{k=1}^{n_{i,s}}\hat d_{i,s,k}\,\hat d_{i,s,k}',\qquad
\hat d_{i,s,k}=\sum_{l}\psi_{i,s,k}(y_l)\,\hat B_{i,s}(y_l)\,\widehat{\mathcal{I}}_{i,s}(y_l)^{-1}X_{i,s,k}\in\R^{J},
\end{equation}
with the score residual $\psi$ and Fisher information $\widehat{\mathcal{I}}$ of~\eqref{sandwich}; consistency follows from Theorem~\ref{thm:weights}.

The interval~\eqref{eq:CI} achieves nominal one-sided $(1-\alpha)$ coverage
asymptotically when $f_t(x)>0$. When $f_t(x)=0$, $\hat\sigma^2_t(x)$ degenerates to zero
and the lower bound collapses to $0$; testing $H_0:f_t(x)=0$ requires a separate procedure,
described in Section~\ref{sec:inference}.

\subsection{Inference: Supremum Test for $H_0: f(x,\calY_0)=0$}
\label{sec:inference}
%% ---------------------------------------------------------------

The null hypothesis $H_0:f_t(x,\calY_0)=0$ is equivalent to $H_0:\Delta_t(y|x)=0$ for
(Lebesgue-almost) all $y\in\calY_0$. We test this hypothesis via the supremum of the pointwise CDF differences over $\calY_0$.

The test statistic for post-treatment period $t$ and region $\calY_0$ is
\begin{equation}
  \label{eq:Tn}
  T_n(x,t,\calY_0) := \sqrt{n} \cdot \sup_{y_l\in\calY_0}\,|\hat{\delta}_{x,t}(y_l)|,
\end{equation}
where $\hat\delta_{x,t}(y_l):=\hat F_{1,t}(y_l|x)-\hat F^0_{1,t}(y_l|x)$. The full-distribution test uses $\calY_0=\calY_m$; a focused test restricts to a proper subset such as the minimum-wage corridor. The asymptotic theory applies to both with $m$ replaced by $|\{l:y_l\in\calY_0\}|$.

Critical values are obtained by 
%a bootstrap:
simulation of the Gaussian process $Z_x$ from Remark~\ref{thm:sup}:

\begin{enumerate}
  \item On a previously chosen grid $y_l$ estimate the covariance kernel matrix by the sandwich estimator $\hat{K}(y_l, y_{l'})$ from \eqref{sandwich}.
  \item Compute the Cholesky factorisation $\hat K = LL'$ of the covariance matrix.
  \item Draw $S$, e.g. $S=10{,}000$, independent realisations $B^{(s)} = L\varepsilon^{(s)}$, $\varepsilon^{(s)}\sim\mathcal{N}(0,I_m)$, and set $T^{(s)} = \sup_l|B^{(s)}_l|$.
  \item Reject $H_0$ at level $\alpha$ if $T_n(x)>\hat c_{1-\alpha}$, the empirical $(1-\alpha)$-quantile of $\{T^{(s)}\}_{s=1}^S$.
\end{enumerate}
The $p$-value is $\hat p = S^{-1}\sum_s\mathbf{1}\{T^{(s)}\ge T_n(x)\}$. If Stage~1 rejects $H_0$, the confidence interval~\eqref{eq:CI} provides quantitative precision for $f(x)$.

\subsection{Pre-Trend Test}\label{rem:multipleperiods}
When $T_0\geq 2$, evidence for the validity of PTP can be given by pre-testing. As usual in difference-in-difference approaches (\citealp{Lechner2011}), the idea is to extrapolate from the time before the treatment (where all groups are non-treated) to the time after, where the counterfactual outcome of the treated group is not observed.

The tests can be formed in two ways: for a specific $x$ with the
statistic~\eqref{eq:Tn}, or for all $x$ simultaneously via the modification below.
In the focused form, each pre-period $t\in\{1,\ldots,T_0-1\}$ is treated as a
pseudo-post period and $T_n(x,t,\calY)$ from~\eqref{eq:Tn} is computed with weights
$\hat w$ from periods $1,\ldots,t-1$ and the counterfactual $\hat\theta^0_{1,t}$
compared with $\hat\theta_{1,t}$; under Assumption~\ref{ass:span} it should not exceed
$\hat c_{1-\alpha}$, and exceedance signals a pre-treatment violation. We apply this
with $T_0=3$ in Section~\ref{sec:empirics_nj}.

The second form tests PTP for all $x$ at once. If the test does not reject (as is the case in our empirical application), there is direct evidence for a parallel trend structure over the whole parameter space. For strictly increasing $\Lambda$,
$\delta_{x,t}(y)=0$ for every $x$ iff $\theta_{1,t}(y)=\theta^0_{1,t}(y)$, so one
replaces $|\hat\delta_{x,t}(y_l)|$ in~\eqref{eq:Tn} by the parameter-difference norm:
\begin{equation}\label{eq:Tnall}
  \tilde T_n(t,\calY_0):=\sqrt{n}\,\sup_{y_l\in\calY_0}\bigl\|\hat\theta_{1,t}(y_l)-\hat\theta^0_{1,t}(y_l)\bigr\|.
\end{equation}
Critical values follow the
same simulation scheme as for $T_n$, drawing the vector-valued process and taking the
supremum of its norm.

A rejection of the
full-distribution test $T_n(x,t,\calY)$ that does not extend to
$T_n(x,t,\calY_0)$ provides evidence that any PTP violation is driven
by distributional shifts outside $\calY_0$, and is therefore consistent
with Assumption~\ref{ass:span} on $\calY_0$ as required for
$\hat f_t(x,\calY_0)$ to be a valid causal estimand.

\section{Asymptotic Theory}
\label{sec:asymptotics}
%% ---------------------------------------------------------------
This section establishes the asymptotic foundations for the estimators and the tests introduced in Sections~\ref{sec:framework}. All proofs are deferred to the Appendix. We maintain the following conditions throughout:

\begin{assumption}[Sampling]
\label{ass:sampling}
For each $(i,t)$, $\{(Y_{itk},X_{itk})\}_{k=1}^{n_{it}}$ are i.i.d.; different
cells are independent; $n_{it}/n\to r_{it}\in(0,\infty)$.
\end{assumption}

\begin{assumption}[DR Regularity]
\label{ass:regularity}
For each $(i,t)$ and $y\in\calY_0$, $\theta_{it}(y)$ uniquely maximizes the expected
log-likelihood and the score satisfies conditions for the functional CLT in
$\linf(\calY_0)^p$, see \citet{ChernozhFVMelly2013}, Condition DR.
\end{assumption}

\begin{assumption}[Gram Matrix]
\label{ass:gram}
The population Gram matrix $G^*\in\R^{J\times J}$, defined as the population analogue of $\hat G$ in~\eqref{eq:QP},
\[
  G^*_{kl} = \frac{1}{T_0}\sum_{t=1}^{T_0}\int_{\calY_0}\theta_{kt}(y)'\theta_{lt}(y)\,dy,
\]
is positive definite.
\end{assumption}

\begin{assumption}[Pre-Treatment Balance]
\label{ass:span}
The pre-treatment DR parameter function of the treated group lies in the closed linear span of the control groups' pre-treatment DR parameter functions in $L^{\infty}(\calY_0)^{p}$, i.e., there exist weights $w^{*}=(w^{*}_{2},\ldots,w^{*}_{J+1})'$ with $\mathbf{1}'w^{*}=1$ such that
\[
  \theta_{1,t}(y) = \sum_{i=2}^{J+1}w^{*}_{i}\,\theta_{i,t}(y) \qquad \text{for almost all } y\in\calY_0 \text{ and all } t=1,\ldots,T_0.
\]
\end{assumption}

\begin{remark}
Assumption~\ref{ass:span} is the functional analogue of the classical SC span condition: the treated unit's pre-treatment path lies in the convex (here: affine) hull of the donor pool in the DR parameter space. It implies perfect pre-treatment balance in all pre-periods, so that the weight estimator~\eqref{eq:QP} is consistent for $w^*$ and $\hatthetazerot{t}$ is a consistent estimator of $\thetazerot{t}$ for each post-period $t$. Together with Assumption~\ref{ass:gram}, it also implies that $w^*$ is the unique minimum-norm solution to the pre-treatment balance equations. This assumption can be relaxed, in particular under an orthogonality condition (Corollary \ref{cor:orthogonality}) and using a different estimator (Remark \ref{rem:direct_ptp}). Moreover, it can be tested, see Section~\ref{rem:multipleperiods}.
\end{remark}

\begin{remark}
Assumption~\ref{ass:gram} ensures that the control groups are sufficiently distinct to identify the weights uniquely.
\end{remark}

The technical foundation rests on three results (Lemmas~\ref{lem:DRconv}, \ref{lemma:gram} and Theorem~\ref{thm:weights}), stated and proved in Appendix~\ref{app:lemmas}: weak convergence of the DR estimators in $l^\infty(\calY_0)^p$ (Lemma~\ref{lem:DRconv}), asymptotic normality of the Gram-matrix estimator (Lemma~\ref{lemma:gram}), and asymptotic normality of the weight estimator (Theorem~\ref{thm:weights}), from which consistency of the plug-in $\hat V_w$ follows by continuity. We now state the main result. Fix a post-treatment period $t\in\{T_0+1,\ldots,T_0+T_1\}$. Let $\Theta_t(y):=(\theta_{2,t}(y),\dots,\theta_{J+1,t}(y))^{\prime}\in\mathbb{R}^{J \times p}$ collect the period-$t$ DR parameters of all control groups.

\begin{theorem}[Asymptotic distribution of counterfactual and treatment effect]
\label{thm:main}
Under Assumptions~\ref{ass:ptp}--\ref{ass:span}, for a fixed evaluation point $x\in\calX$ and post-treatment period $t$:

\textup{(a)} $\sqrt{n}(\hatthetazerot{t}(\cdot)-\thetazerot{t}(\cdot))\wto\G^0_t(\cdot)$
in $\linf(\calY_0)^p$, where
\begin{equation}
  \label{eq:cov0}
  \Cov(\G^0_t(y),\G^0_t(y'))
  =\underbrace{\sum_{i=2}^{J+1}\frac{(w^*_i)^2}{r_{i,t}}
    \Cov(\G_{i,t}(y),\G_{i,t}(y'))}_{\text{DR estimation error (post-period $t$)}}
  +\underbrace{\Thetamat_t(y)'V_w\,\Thetamat_t(y').}_{\substack{\text{weight estimation error}\\[2pt] \scriptstyle V_w\text{ aggregates the pre-period }r_{i,s}^{-1}\Phi_{i,s}\text{ of~\eqref{eq:SigmaGc}}}}
\end{equation}

\textup{(b)} Define the common-scale process $\mathbb{H}_{it}:=r_{it}^{-1/2}\mathbb{G}_{it}$. For $Z_{n,x,t}(y) := \sqrt{n}(\hat\delta_{x,t}(y)-\delta_{x,t}(y))$, it holds $Z_{n,x,t}(\cdot) \wto Z_{x,t}(\cdot) = \lambda(x'\theta_{1,t}(\cdot)) \cdot x' \mathbb{H}_{1,t}(\cdot) - \lambda(x'\thetazerot{t}(\cdot)) \cdot x' \mathbb{G}^0_t(\cdot)$.

\textup{(c)} If $f_t(x)>0$:
$\sqrt{n}(\hat{f}_t(x)-f_t(x))\dto\mathcal{N}(0,\sigma^2_t(x))$,
where $\sigma^2_t(x)$ is the variance of
$\dot\phi_{(\theta_{1,t}(y),\thetazerot{t}(y))}(\H_{1,t},\G^0_t)$ and
\[
  \dot\phi(h,h^0)=2\int_{\calY_0}
  [\Lambda(x'\theta_{1,t}(y))-\Lambda(x'\thetazerot{t}(y))]
  [\lambda(x'\theta_{1,t}(y))x'h-\lambda(x'\thetazerot{t}(y))x'h^0]\,dy.
\]
\end{theorem}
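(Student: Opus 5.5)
The argument is a functional delta method built on the three appendix results: Lemma~\ref{lem:DRconv} (joint weak convergence of $\sqrt{n_{it}}(\hat\theta_{it}-\theta_{it})\wto\G_{it}$ in $\linf(\calY_0)^p$, independent across cells by Assumption~\ref{ass:sampling}), Lemma~\ref{lemma:gram}, and Theorem~\ref{thm:weights} ($\sqrt{n}(\hat w-w^*)$ is asymptotically normal with variance $V_w$). In Theorem~\ref{thm:weights}, $\hat w-w^*$ is a linear functional of the pre-period processes $\G_{i,s}$, $s\le T_0$, obtained by differentiating $(G,c)\mapsto w$ in~\eqref{eq:wexplicit}. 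The post-period $t>T_0$ processes $\G_{i,t}$ come from different cells. Hence, by independence, $(\sqrt{n}(\hat w-w^*),\{\sqrt{n}(\hat\theta_{i,t}-\theta_{i,t})\}_{i=1}^{J+1})$ converges jointly, with the two blocks independent. The marginal of $\sqrt{n}(\hat\theta_{i,t}-\theta_{i,t})$ is $r_{i,t}^{-1/2}\G_{i,t}=\H_{i,t}$, since $\sqrt{n}=\sqrt{n_{it}}\,(n/n_{it})^{1/2}$.

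\textbf{Part (a).} By Assumption~\ref{ass:span} and PTP (Remark~\ref{rem:balance}), $\thetazerot{t}=\sum_i w^*_i\theta_{i,t}$ on $\calY_0$. Then
\[
\sqrt{n}(\hatthetazerot{t}-\thetazerot{t})=\sum_{i=2}^{J+1}w^*_i\sqrt{n}(\hat\theta_{i,t}-\theta_{i,t})+\Theta_t(\cdot)'\sqrt{n}(\hat w-w^*)+\sum_i(\hat w_i-w^*_i)\sqrt{n}(\hat\theta_{i,t}-\theta_{i,t}).
\]
The last term is $o_P(1)\cdot O_P(1)$ uniformly in $y$, since $\hat w\pto w^*$ and the processes are tight. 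The map $(h_2,\dots,h_{J+1},v)\mapsto\sum w^*_ih_i+\Theta_t(\cdot)'v$ is continuous and linear on $\linf(\calY_0)^{pJ}\times\R^J$, given that $\Theta_t$ is bounded (c\`adl\`ag on a compact set). The continuous mapping theorem then yields $\G^0_t=\sum_i w^*_i\H_{i,t}+\Theta_t'\,\mathbb V$ with $\mathbb V\sim N(0,V_w)$ independent of the $\H_{i,t}$. Cross terms in the covariance vanish by this independence, and by the independence of the $\G_{i,t}$ across $i$, which gives~\eqref{eq:cov0}.

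\textbf{Parts (b) and (c).} For (b), $\hat\theta_{1,t}$ depends on cell $(1,t)$ only, which is independent of every cell entering $\hatthetazerot{t}$. This gives joint convergence of $(\sqrt{n}(\hat\theta_{1,t}-\theta_{1,t}),\sqrt{n}(\hatthetazerot{t}-\thetazerot{t}))\wto(\H_{1,t},\G^0_t)$. Next, apply the functional delta method to $\theta\mapsto\Lambda(x'\theta(\cdot))$. This map is Hadamard differentiable from $\linf(\calY_0)^p$ to $\linf(\calY_0)$ with derivative $h\mapsto\lambda(x'\theta(\cdot))x'h(\cdot)$, because $\lambda$ is uniformly continuous on bounded sets and $x'\theta(\cdot)$ is bounded. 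That gives $Z_{x,t}$. For (c), $\phi(g)=\int_{\calY_0}g(y)^2dy$ is Hadamard differentiable on $\linf(\calY_0)$ with derivative $h\mapsto2\int g\,h$, assuming $\calY_0$ is bounded or has finite measure. Compose it with the result of (b) via the chain rule. The limit $\dot\phi(\H_{1,t},\G^0_t)$ is a continuous linear functional of a Gaussian process, hence $N(0,\sigma^2_t(x))$. It is nondegenerate generically when $f_t(x)>0$, since $\delta_{x,t}\neq0$ on a set of positive measure. If the Riemann-sum version over $\calY_m$ is used, note in addition that the discretization error is $o(n^{-1/2})$ provided $m$ grows fast enough, or treat the grid as fixed.

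\textbf{Main obstacle.} The delicate step is the joint convergence and the independence claim used in (a). I must verify from the proof of Theorem~\ref{thm:weights} that $\sqrt{n}(\hat w-w^*)$ is asymptotically a linear functional only of pre-period processes. This requires Assumption~\ref{ass:span}, so that the residual $\theta_{1s}-\sum w^*_i\theta_{is}$ vanishes and no bias term of order $\sqrt{n}$ appears. I also need the linearization remainder of $\hat G^{-1}$ to be $o_P(n^{-1/2})$, which holds by Lemma~\ref{lemma:gram} and the positive definiteness in Assumption~\ref{ass:gram}. Once this representation is in hand, everything else is the continuous mapping theorem and the delta method.
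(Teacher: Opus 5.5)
Your proposal is correct and follows the paper's proof essentially step by step. It uses the same decomposition of $\hat\theta^0_{1,t}-\theta^0_{1,t}$ into a post-period DR term $\sum_i w_i^*(\hat\theta_{i,t}-\theta_{i,t})$ and a weight term linearized as $\Theta_t(\cdot)'(\hat w-w^*)$, the same pre/post independence argument for the two-component covariance, and the same functional delta method for $(\theta_1,\theta_0)\mapsto\Lambda(x'\theta_1)-\Lambda(x'\theta_0)$ followed by the chain rule. The differences are minor: the paper treats $f_t(x)$ in (c) as a fixed-grid Riemann sum rather than an integral, and your ``main obstacle'' is easier than you fear, since $\hat w$ is built only from pre-period cells and is therefore exactly independent of the period-$t$ estimators in finite samples.
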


\begin{remark}
In Theorem~\ref{thm:main}(c), the condition $f_t(x)>0$ is essential: under $H_0:\delta(y)=0$, the derivative
$\dot\phi$ vanishes and $\sigma^2(x)=0$, so this result does not provide a
valid test for $H_0$. It underpins the confidence interval~\eqref{eq:CI} in Section~\ref{sec:framework}, which is valid conditional on $f_t(x)>0$ having been established by the supremum test of Section~\ref{sec:inference}. In fact, Theorem~\ref{thm:main}(b) is the theoretical basis for this test. Moreover, the covariance kernel $K$ of $Z(\cdot)$ is estimated with the sandwich formula for the confidence interval~\eqref{eq:CI}.
\end{remark}

\begin{remark}
The two-component covariance in~\eqref{eq:cov0} corrects a common implicit error:
weight estimation uncertainty and DR estimation uncertainty are both $O_p(n^{-1/2})$
and are both included in the variance of the counterfactual. The two terms are
uncorrelated because $\hat w$ depends only on pre-period data while the first term involves post-period data (Assumption~\ref{ass:sampling}). The terms can be consistently estimated (as in \eqref{sandwich}) by replacing all population quantities with their sample counterparts and estimating the covariance kernels of $\mathbb{G}_{it}$ via the sandwich formula for binary outcome regressions.
\end{remark}

\begin{remark}[Supremum test]\label{thm:sup}
Theorem~\ref{thm:main}(b) immediately yields the asymptotic null distribution of~\eqref{eq:Tn}: under $H_0:\Delta_t(\cdot|x)=0$ on $\calY_0$, the continuous mapping theorem gives
\[
  T_n(x,t,\calY_0) \dto \sup_{y\in\calY_0}|Z_{n,x,t}(y)|,
\]
where $Z_{n,x,t}$ is the Gaussian process from Theorem~\ref{thm:main}(b). Under $H_1:f_t(x,\calY_0)>0$, $T_n(x,t,\calY_0)\to\infty$ in probability.
\end{remark}

\begin{remark}[Limiting distribution of the all-$x$ pre-trend test]\label{rem:allx_kernel}
Theorem \ref{thm:main}.(a) yields with the continuous mapping theorem
\begin{equation}\label{eq:Tnall_limit}
  \tilde T_n(t,\calY_0)\;\dto\;\sup_{y\in\calY_0}\left|\mathbb{H}_{1,t}(y)-\G^0_t(y)\right|,
\end{equation}
The covariance kernel of this limit process is obtained from~\eqref{sandwich} by dropping the scalar
link factors $\lambda(x'\cdot)$ and replacing each projection $x'\hat{\mathcal I}^{-1}X_k$
by $\hat{\mathcal I}^{-1}X_k$, so that the kernel is $\R^{p\times p}$-valued and the
weight term becomes the $p\times p$ matrix
$\hat\Theta^{(t)}(y_l)'\hat V_w\,\hat\Theta^{(t)}(y_{l'})$.
\end{remark}

\begin{remark}[Confidence interval]
Theorem \ref{thm:main}.(c) yields with some algebraic manipulations
\begin{equation*}
P(f_t(x) \in CI(t,x))\dto 1-\alpha
\end{equation*}
for $f_t(x) > 0.$. Because the bound is reported only after the supremum test rejects, its coverage is to be understood conditional on that rejection.
\end{remark}

\subsection{Imperfect Pre-Treatment Balance}
\label{subsec:imperfect_balance}
If Assumption~\ref{ass:span} fails with residual
$r_n(y):=\theta_{1,T_0}(y)-\sum_{i=2}^{J+1}w_i^*\theta_{i,T_0}(y)\neq 0$,
the counterfactual estimator $\hat{f}_t(x)$ is generally not a consistent estimator of $f_t(x)$. Instead, it converges to a pseudo-value $\tilde f_t(x)\neq f_t(x)$.
The following corollary gives a precise condition under which consistency for $f_t(x)$ is nevertheless restored.

\begin{corollary}[Robustness to imperfect balance]\label{cor:orthogonality}
Suppose Assumption~\ref{ass:ptp} and Assumptions~\ref{ass:sampling}--\ref{ass:gram} hold,
$f_t(x)>0$, and Assumption~\ref{ass:span} fails with residual sequence
$r_n(y):=\theta_{1,T_0}(y)-\sum_{i=2}^{J+1}w_i^*\theta_{i,T_0}(y)$.
Under Assumption~\ref{ass:ptp}, the true counterfactual satisfies
$\thetazerot{t}(y)=\sum_i w_i^*\theta_{i,t}(y)+r_n(y)$, so the scaled bias of
$\hat{f}_t(x)$ relative to $f_t(x)$ is
\begin{eqnarray}\label{eq:bias_decomp}
  \sqrt{n}\bigl[\tilde{f}_t(x)-f_t(x)\bigr] \nonumber
  &=& \sqrt{n}\Bigl[2\langle r_n,\Delta_t\rangle_{x}
    + \int_{\calY_0}\left[\lambda(\eta_t(y))^2+\Delta_t(y|x) \lambda'(\eta_t(y))\right](x'r_n(y))^2\,dy
    \Bigr] \\ &+& O(\|r_n\|^3),
\end{eqnarray}
where $\eta_t(y) := x'\thetazerot{t}(y)$ and $\langle r_n,\Delta_t\rangle_{x}:=\int_{\calY_0}\lambda(x'\eta_t(y))\,
x'r_n(y)\,\Delta_t(y\mid x)\,dy$.
Then $\sqrt{n}(\hat{f}_t(x)-f_t(x))\dto\mathcal{N}(0,\sigma^2_t(x))$ as in
Theorem~\ref{thm:main}(c) if and only if \eqref{eq:bias_decomp} converges to zero.
Two sufficient conditions are:
\begin{itemize}
  \item[(a)] $\langle r_n,\Delta_t\rangle_x=o(n^{-1/2})$ and $\|r_n\|^2=o(n^{-1/2})$;
  \item[(b)] $\langle r_n,\Delta_t\rangle_x=0$ and $\|r_n\|=o(n^{-1/4})$.
\end{itemize}
Condition (b) is strictly weaker than requiring $\|r_n\|=o(n^{-1/2})$, which
would be needed without orthogonality to control the first-order term alone.
\end{corollary}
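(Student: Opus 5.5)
The argument splits $\sqrt{n}(\hat f_t(x)-f_t(x))$ into a stochastic part, centred at the pseudo-value, and a deterministic bias part:
\[
\sqrt{n}(\hat f_t(x)-f_t(x))=\sqrt{n}(\hat f_t(x)-\tilde f_t(x))+\sqrt{n}(\tilde f_t(x)-f_t(x)).
\]
Here $\tilde f_t(x):=\int_{\calY_0}[\Lambda(x'\theta_{1,t}(y))-\Lambda(x'\tilde\theta^0_{1,t}(y))]^2dy$ and $\tilde\theta^0_{1,t}:=\sum_i w_i^*\theta_{i,t}$ is the probability limit of $\hatthetazerot{t}$.

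\textbf{Stochastic part.} The proof of Theorem~\ref{thm:main} uses Assumption~\ref{ass:span} only to pin the centring to $\thetazerot{t}$. The limits themselves come from Lemma~\ref{lem:DRconv}, Lemma~\ref{lemma:gram} and Theorem~\ref{thm:weights}, together with the closed form \eqref{eq:wexplicit}. These do not need Assumption~\ref{ass:span}; only Assumption~\ref{ass:gram} is required for the delta method through $\hat G^{-1}$. So I would rerun that argument with $w^*$ now the population minimiser of the pre-period criterion. This gives $\sqrt{n}(\hatthetazerot{t}-\tilde\theta^0_{1,t})\wto\G^0_t$, with the weight term using the pseudo-true $w^*$. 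By the functional delta method for the Hadamard-differentiable map $\phi$ from Theorem~\ref{thm:main}(c), I get $\sqrt{n}(\hat f_t(x)-\tilde f_t(x))\dto N(0,\tilde\sigma^2_t(x))$.

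Since $r_n\to0$ (an $n$-indexed sequence of local misspecification), continuity gives $\tilde\sigma^2_t(x)\to\sigma^2_t(x)$. The sequence version needs a uniform delta method, i.e.\ Hadamard differentiability tangentially to a converging sequence of base points. I would invoke the standard extension (van der Vaart--Wellner, Thm.~3.9.5) rather than reprove it.

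\textbf{Bias expansion.} I would write $\theta^0:=\thetazerot{t}=\tilde\theta^0_{1,t}+r_n$ and $\eta_t=x'\theta^0$. For each $y$, a Taylor expansion of $\Lambda$ around $\eta_t(y)$ gives
\[
\Lambda(x'\tilde\theta^0)=\Lambda(\eta_t)-\lambda(\eta_t)x'r_n+\tfrac12\lambda'(\eta_t)(x'r_n)^2+O(|x'r_n|^3).
\]
Hence the pseudo-difference is $\tilde\Delta_t=\Delta_t+\lambda(\eta_t)x'r_n-\tfrac12\lambda'(\eta_t)(x'r_n)^2+O(|x'r_n|^3)$. Squaring and integrating over $\calY_0$ gives:
\begin{itemize}
\item the cross term $2\int\lambda(\eta_t)x'r_n\Delta_t=2\langle r_n,\Delta_t\rangle_x$;
\item the quadratic terms $\lambda(\eta_t)^2(x'r_n)^2-\Delta_t\lambda'(\eta_t)(x'r_n)^2$, with the sign convention absorbed as in \eqref{eq:bias_decomp};
\item a cubic remainder bounded by $\sup|\lambda'|,\sup|\lambda''|$, $\|x\|$ and the finite Lebesgue measure of $\calY_0$, hence $O(\|r_n\|^3)$ in sup-norm.
\end{itemize}
This yields \eqref{eq:bias_decomp}.

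\textbf{Equivalence and sufficient conditions.} Slutsky's lemma applied to the decomposition gives the ``if'' direction. For ``only if'': the stochastic part converges in law to a centred normal, so if the deterministic sequence \eqref{eq:bias_decomp} had a nonzero limit point $b$, a subsequence would converge to $N(b,\sigma^2_t(x))$. This contradicts the claimed limit, with the unbounded case handled by tightness. For the sufficient conditions:
\begin{itemize}
\item under (a), the first term is $o(1)$ after scaling, the quadratic term is bounded by $C\sqrt n\|r_n\|^2=o(1)$, and $\sqrt n\|r_n\|^3\le\sqrt n\|r_n\|^2\cdot\|r_n\|=o(1)$;
\item under (b), the first term vanishes and $\|r_n\|=o(n^{-1/4})$ gives $\sqrt n\|r_n\|^2=o(1)$.
\end{itemize}
Finally, I would note that without orthogonality the first-order term is only controlled if $\sqrt n\|r_n\|\to0$.

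\textbf{Main obstacle.} The hardest step is the stochastic part. I need the weight-estimation and delta-method arguments of Theorem~\ref{thm:main} to hold uniformly along the drifting base point $\tilde\theta^0_{1,t}$. I also have to check that the residual $r_n$, which enters $\hat c$, does not add an extra $O_p(n^{-1/2})$ term to the weights beyond what $V_w$ captures. I would handle this by noting that $r_n\to0$ makes its contribution to $\partial w^*/\partial c$ of order $\|r_n\|=o(1)$, so it does not change the limit law.
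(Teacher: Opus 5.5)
Your proposal is correct and follows essentially the paper's route: the same split $\sqrt{n}(\hat f_t(x)-\tilde f_t(x))+\sqrt{n}(\tilde f_t(x)-f_t(x))$, the delta-method argument of Theorem~\ref{thm:main}(c) for the stochastic part, a second-order Taylor expansion for the bias, and Slutsky's lemma. Your additions supply rigor that the paper's sketch omits: the drifting-base-point (uniform delta method) argument, the subsequence proof of the ``only if'' direction, the $O_p(\|r_n\|)$ contamination of the weight linearisation, and the $\sqrt{n}$ factor on the cubic remainder. One small point: with $r_n$ and $\Delta_t$ defined as in the statement, your quadratic coefficient $\lambda(\eta_t)^2-\Delta_t\lambda'(\eta_t)$ is the correct one, so the $+$ in \eqref{eq:bias_decomp} is a typo rather than a convention to be absorbed. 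This is harmless, because conditions (a) and (b) only use a bound on that coefficient.
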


\begin{remark}[Economic interpretation]
\label{rem:orthogonality_econ}
The orthogonality condition $\langle r_n,\Delta\rangle_{x}=0$ holds whenever
the pre-treatment misfit $r_n(y)$ is concentrated in regions of the outcome
space where the treatment effect $\Delta(y\mid x)$ is negligible, and vice
versa.  In the New Jersey application, $\Delta(y\mid x_{10})$ is sharply
concentrated in the minimum-wage corridor $[\log 4.25,\log 5.10]$.
Corollary~\ref{cor:orthogonality} therefore implies that the estimate for
young low-education workers is robust to pre-treatment misfit outside this
corridor, regardless of its magnitude.
\end{remark}

\section{Simulation Study}
\label{sec:simulations}
%% ---------------------------------------------------------------

To mimic the empirical application with a Mincer's earnings equation, we consider $J+1=5$ groups and two time periods. For $i=1,\ldots,5$ and $j=0,1$ we have the model:
\begin{equation*}
Y_{it} = \beta_{0,it} + \beta_{1,it} X_{1,it} + \beta_{2,it} X_{2,it} + \beta_{3,it} X_{3,it} + \epsilon_{it}
\end{equation*}
with i.i.d.\ $X_{1,it},X_{2,it},X_{3,it} \sim N(0,1)$ and $\epsilon_{it}$ a noise distribution. This leads to $F_{it}(y|x) = \Phi\bigl(\theta_{it}(y)'x\bigr)$ with $\theta_{it}(y) = \bigl(y-\beta_{0,it},-\beta_{1,it},-\beta_{2,it},-\beta_{3,it}\bigr)$.

The $J=4$ control groups have parameter vectors $\boldsymbol{\beta}_i = \bar{\boldsymbol{\beta}} + c\,\mathbf{e}_i$ for $i=2,\ldots,5$, where $\bar{\boldsymbol{\beta}}=(1,1,1,1)'$, $c=0.8$, and $\mathbf{e}_i$ is the $i$-th standard basis vector in $\mathbb{R}^4$. Each control group thus differs from the others in exactly one coefficient. Under the null hypothesis, the treated group has $\boldsymbol{\beta}_1 = \frac{1}{J}\sum_{i=2}^{5}\boldsymbol{\beta}_i = (1.2,1.2,1.2,1.2)'$, so that PTP holds exactly with equal true weights $w^*_i=1/4$. This design ensures that the Gram matrix $G^*$ has full rank $J=4$ and condition number $\kappa(G^*)=55$, satisfying Assumption~\ref{ass:gram}.

Under the alternative hypothesis ($H_1$), the treated group's post-treatment
outcome receives a \emph{covariate-heterogeneous} effect $\Delta\cdot \beta_{1,11}$, i.e.\ a
shift only along the $X_1$ direction, with $\Delta\in\{0,0.1,\ldots,0.5\}$. This shifts
the \emph{conditional} distribution at covariate values with $X_1\neq0$, while leaving
the \emph{unconditional} (marginal) distribution almost unchanged: since $E[X_1]=0$, the
marginal mean and median are preserved and only the spread changes slightly. The design
thus isolates a setting in which conditioning on covariates is essential.

The simulation focuses on the size and power properties of the supremum test \eqref{eq:Tn}. We compare three procedures: the conditional supremum test~\eqref{eq:Tn} with correctly specified errors ($\epsilon_{i,t}\sim N(0,1)$, ``Probit'') and with misspecified errors ($\epsilon_{i,t}\sim Lo(0,1)$, ``Logit''), and an \emph{unconditional} distributional synthetic-control test that applies the same idea to the marginal outcome distribution (intercept-only DR, $p=1$), the natural in-framework analogue of an unconditional approach in the spirit of \citet{Gunsilius2023}. The DR working model uses the probit link throughout. We use a grid of $m=10$ equidistant quantile levels of the outcome between $0.05$ and $0.95$, sample sizes $n_{it} \in \{200, 500, 1000\}$, and $N_{\rm mc}=1000$ Monte Carlo replications. Critical values are obtained by Gaussian process simulation with $S=10{,}000$ draws using the estimated covariance matrix $\hat K$.

The conditional tests are evaluated at $x=(1,1,0,0)'$ (so that $x_1=1$ and the
conditional shift equals $\Delta$); the unconditional test targets the marginal CDF.
Figure~\ref{fig:power_curve} shows that all three procedures control size at $\Delta=0$, whereas the unconditional test is conservative.
Under $H_1$, the conditional tests detect the effect with power increasing in $n$ and
$\Delta$ (slightly higher under the correctly specified probit link than under the
misspecified logit link), whereas the unconditional test has low power: The
covariate-heterogeneous effect is invisible in the marginal distribution. Conditioning
on covariates is therefore necessary to detect effects that are heterogeneous across the
covariate space.

\begin{figure}[ht]
\centering
\includegraphics[width=\textwidth]{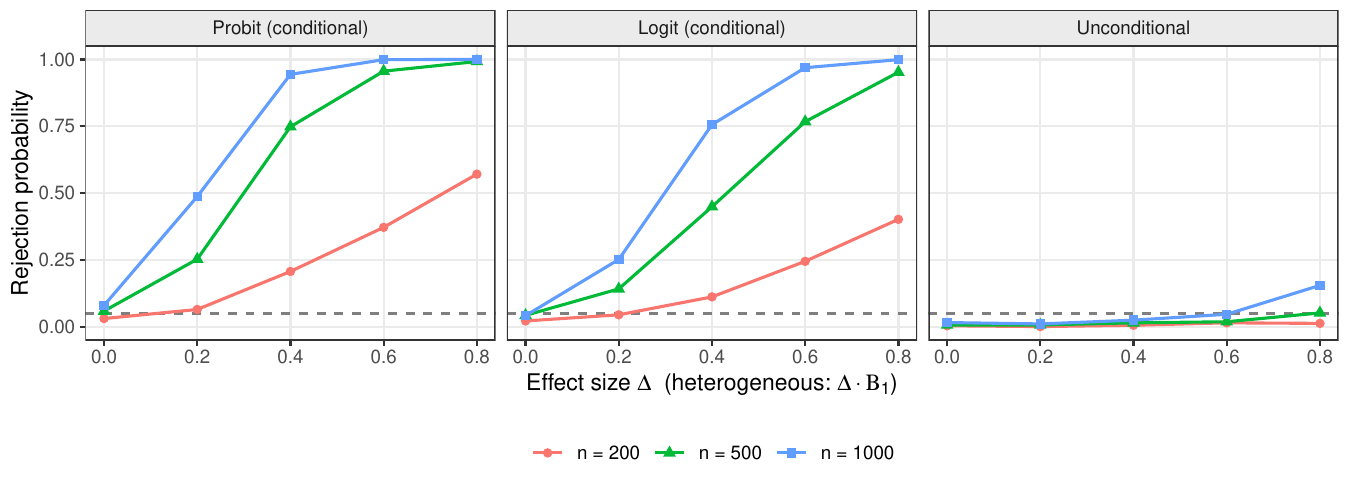}
\caption{Simulated size and power under the covariate-heterogeneous effect
$\Delta\cdot \beta_{1,11}$, by sample size $n$. Left: conditional test, correctly specified
(probit, $N(0,1)$ errors). Middle: conditional test, misspecified link (logistic
$Lo(0,1)$ errors). Right: unconditional test on the marginal distribution. Dashed line:
nominal level $\alpha=0.05$.}
\label{fig:power_curve}
\end{figure}

Figure~\ref{fig:coverage} reports properties of the one-sided 95\% confidence
interval~\eqref{eq:CI} for $f_t(x)$ from the conditional probit estimation, under $H_1$
($\Delta>0$). Coverage is close to the nominal level across $\Delta$; the small deviations
at the smallest effect sizes reflect the positive finite-sample bias of $\hat f$, which is
a mean of squared quantities. The mean half-width falls with $n$ and grows with $\Delta$,
as expected from the asymptotic theory. In practice the CI is reported only when the
supremum test rejects $H_0$, which already selects cases where $f_0$ is not negligible
relative to estimation error.

\begin{figure}[ht]
\centering
\includegraphics[width=\textwidth]{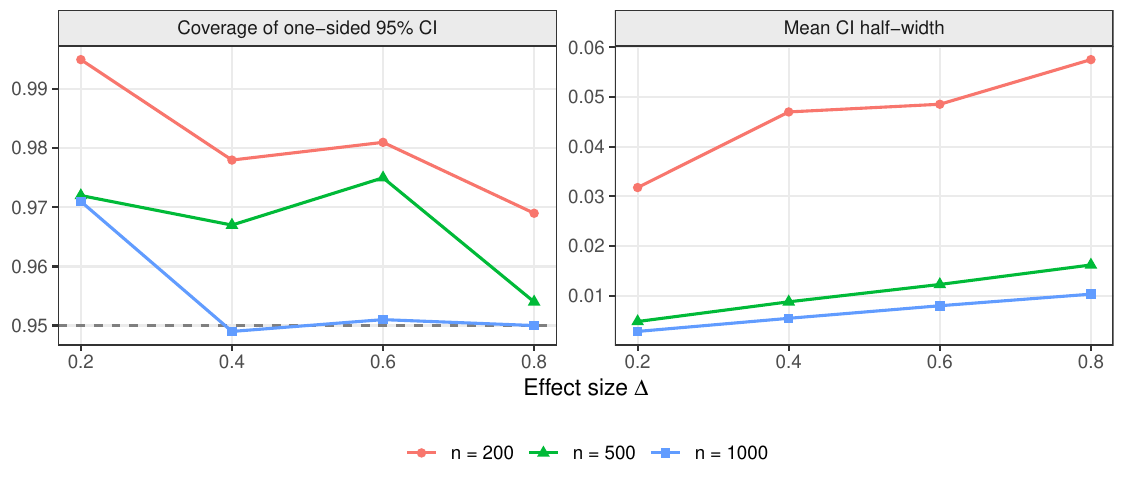}
\caption{Properties of the one-sided 95\% CI~\eqref{eq:CI} for $f_t(x_0)$ under $H_1$
(conditional probit estimation, $N_{\rm mc}=1{,}000$). Left: empirical coverage (dashed:
nominal $0.95$). Right: mean CI half-width. DGP as in Figure~\ref{fig:power_curve}.}
\label{fig:coverage}
\end{figure}

Under $T_0=2$ pre-treatment periods, the $T_0^{-1}$-scaling of $\Sigma_{G,c}$ in Lemma~\ref{lemma:gram} halves the asymptotic variance of the Gram-matrix estimator, and hence of $V_w$, of the weight estimator relative to $T_0=1$. We verified in an auxiliary simulation (same DGP, $T_0=2$, second pre-period drawn i.i.d.\ with the same parameters) that size remains well controlled at the nominal 5\% level and power is modestly improved, consistent with the theoretical efficiency gain.

\section{Empirical Study: New Jersey Minimum Wage 1992}
\label{sec:empirics_nj}
%% ---------------------------------------------------------------

\subsection{Background and Data}

In April 1992, New Jersey raised its state minimum wage from \$4.25 to \$5.05 per
hour (an increase of 19\%) while the federal minimum remained unchanged at \$4.25.
This policy change has been studied in the literature before.
\citet{CardKrueger1994} find no negative employment effects using a DiD comparison
of fast-food restaurants with neighbouring Pennsylvania, contrasting findings in \citep{NeumarkWascher1992}. The DR-SC approach taken here offers a
complementary perspective: rather than average employment effects, we trace the
full conditional wage distribution across different worker types, allowing us to
assess where in the distribution the policy had bite.

The distributional consequences of minimum wage increases have been widely studied:
\citet{DiNardoFortinLemieux1996} document a link between minimum wage erosion and
rising wage inequality using CPS data, and \citet{AutorManningSmith2016} provide
evidence that the minimum wage compresses the lower tail of the wage distribution.
\citet{ChernozhFVMelly2013} study counterfactual wage distributions using distribution
regression. The contribution of the present approach is a clearer causal identification in terms of the workers' characteristics.

The validity of PTP in this application rests on the observation that the three pre-treatment years (1989--1992) span a well-defined common macroeconomic episode: the onset and trough of the 1990--91 recession, followed by an early recovery. This recession depressed low-education wages across all US states, but with differential intensity depending on each state's exposure to interest-rate-sensitive sectors (construction, durable manufacturing) and to the contemporaneous defense-spending contraction \citep{BlanchardKatz1992}. These cross-state differences in cyclical exposure correspond to the state-specific factor loadings $\Lambda_i(y)$ in the latent-factor model of Remark~2.2: New Jersey's exposure profile (concentrated in finance, insurance, and real estate) is approximated by the weighted combination of donor states with Florida (finance and real estate) receiving the largest positive weight and South Dakota and Missouri (manufacturing-heavy, lower cyclical correlation with New Jersey) receiving negative weights, as reported in Table~\ref{tab:weights}. The convincing pre-trend tests in Section~\ref{subsec:pretrends} confirm that this synthetic New Jersey tracks the observed conditional wage distribution across all three pre-treatment years.

We use data from the CPS Basic Monthly files (\citealp{flood2025ipums}) organised into April--March policy years aligned with the reform date: $t=1$ (Apr 1989--Mar 1990), $t=2$ (Apr 1990--Mar 1991) and $t=3$ (Apr 1991--Mar 1992) are pre-treatment ($T_0=3$), and $t=4$ (Apr 1992--Mar 1993) is post-treatment ($T_1=1$). This periodisation aligns the start of the post-period exactly with the April 1992 increase, so that no partially-treated calendar year has to be discarded. We restrict to workers with a valid hourly wage, at most \$99/hour and non-missing education and experience.
The outcome $Y$ is the log nominal hourly wage. The covariate vector $X$ includes a
constant, years of education, years of potential experience, and its square, all
standardised to zero mean and unit variance using the pooled sample. The
standardised experience-squared variable is constructed as the square of standardised
experience, $\widetilde{x_3}=\bigl((\mathrm{exper}-\bar x_2)/s_2\bigr)^2$.
New Jersey policy-year sample sizes are 3{,}852, 3{,}998 and 4{,}031 (pre-treatment) and 3{,}905 (post-treatment).

The donor pool consists of $J=42$ US states, excluding eight states or districts
with their own minimum wage changes during 1989--1993 (Connecticut, Massachusetts,
Rhode Island, New Hampshire, Vermont, Alaska, Hawaii, and the District of Columbia).

As evaluation points, we consider the median worker (education = 12, experience = 10) as well as the four combinations of the $10\%$ and $90 \%$ values of education (10, 16) and experience (2, 37).

\subsection{Weight Estimation and Pre-Treatment Fit}

We estimate the DR parameters on a grid of $m=32$ quantiles of the pooled
log-wage distribution (10th--90th percentiles). The total sample size is $n=381{,}953$ across all group-time cells.
The Gram matrix $\hat G$ is the time-average over the three pre-treatment periods
(see Section~2.2). It has full rank $J=42$, condition number
$\kappa(\hat G)=9.6\times10^4$, and is numerically
stable without regularisation. Table~\ref{tab:weights} reports the five largest
weights by magnitude.

\begin{table}[ht]
\centering
\caption{Five largest synthetic control weights}
\label{tab:weights}
\small
\begin{tabular}{lccccc}
\toprule
Florida & New York & South Dakota & Nevada & Missouri \\
$0.515$ & $0.479$ & $-0.255$ & $0.199$ & $-0.196$ \\
\bottomrule
\end{tabular}
\end{table}

\noindent
Florida receives the largest positive weight, with New York second. The full weight distribution is shown in Figure~\ref{fig:nj_weights}. 20 of 42
states receive negative weights, reflecting that New Jersey's pre-treatment wage
structure is not fully within the convex hull of the donor pool, which the
unconstrained formulation handles naturally.

\begin{figure}[ht]
\centering
\includegraphics[width=.5\textwidth]{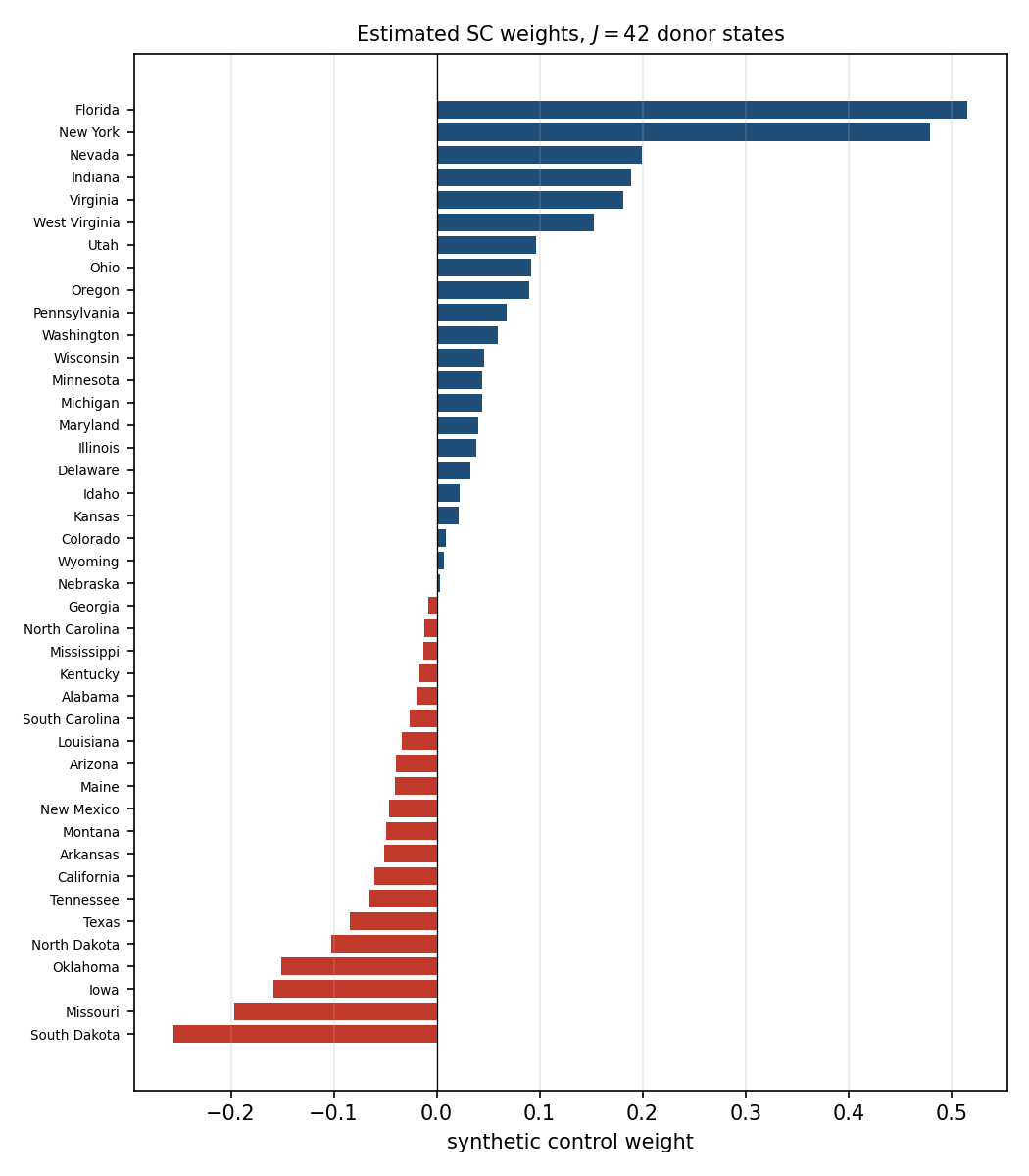}
\caption{Estimated synthetic control weights for all $J=42$ donor states.
Blue bars: positive weights. Red bars: negative weights.}
\label{fig:nj_weights}
\end{figure}

Due to the post-period evidence from \ref{subsec:interpretation}, the evaluation point $x_{10}$ (educ=10, exper=2), reflecting low-education young workers, is of particular interest. Figure~\ref{fig:nj_prefit} shows the pre-treatment fit for this point in all three pre-treatment years.
The synthetic control tracks the observed New Jersey CDF closely across all
pre-treatment years, with maximum sup-distance $\sup_y|\hat F_{\rm NJ,t}(y|x_{10})-
\hat F^0_{\rm NJ,t}(y|x_{10})|\leq 0.037$ in each pre-period.

\begin{figure}[ht]
\centering
\includegraphics[width=.5\textwidth]{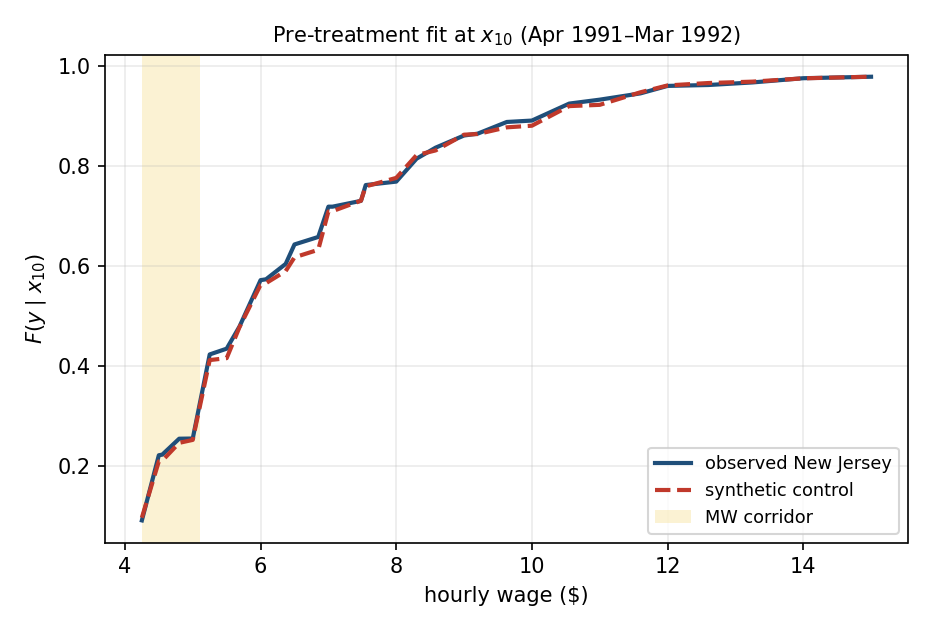}
\caption{Pre-treatment fit at $x_{10}$ (educ=10, exper=2). Observed New Jersey
conditional CDF in the last pre-period (Apr 1991--Mar 1992, solid) and the
synthetic control counterfactual (dashed); the yellow band marks the MW corridor
$[\$4.25,\$5.10]$.}
\label{fig:nj_prefit}
\end{figure}

\subsection{Pre-Trend Tests}\label{subsec:pretrends}

Figure~\ref{fig:nj_pretrend} reports the full-distribution and focused
(MW-corridor) $p$-values of $T_n(x,t,\calY)$ for the two pseudo-post pre-periods
(Apr89/90$\to$Apr90/91 and $\to$Apr91/92) and the post-treatment period
($\to$Apr92/93), at all five evaluation points. The corridor for the focused test is fixed \emph{a priori} to $\calY_0=[\log\$4.25,\log\$5.10]$ by the policy (raising the floor from \$4.25 to \$5.05).

Both pseudo-post tests do not reject $H_0$ at the five evaluation points, on the full distribution and on the
corridor (all $p>0.35$; at $x_{10}$, $p=0.74$ and $p=0.54$ on $\calY$, and
$p=0.81$ and $p=0.70$ on $\calY_0$). The 1990--91 recession is absorbed \emph{within} the twelve-month
windows rather than surfacing at a period transition, and the late-1991
announcement window (contained in the third pre-period) produces no detectable
corridor pre-trend. Unlike a calendar-year periodisation, the design therefore
requires no reconciliation of a full-distribution pre-trend rejection with the
focused test. It seems that the parallel-trends assumption holds directly, both overall and on
the corridor.

Beyond the five evaluation points, we also apply the $x$-simultaneous pre-trend
test~\eqref{eq:Tnall}, which checks parallel trends in the full DR parameter vector and
hence for all covariate values at once. Neither pseudo-post transition rejects at the
$10\%$ level: $\tilde T_n=144.9$ against $\hat c_{0.90}=245.3$ ($p=0.75$) for
Apr89/90$\to$Apr90/91, and $\tilde T_n=209.6$ against $\hat c_{0.90}=241.7$ ($p=0.22$)
for $\to$Apr91/92. So, it is reasonable to assume that parallel trends in parameters holds in the pre-periods not only
at the points of interest but jointly across the covariate space.

\begin{figure}[ht]
\centering
\includegraphics[width=\textwidth]{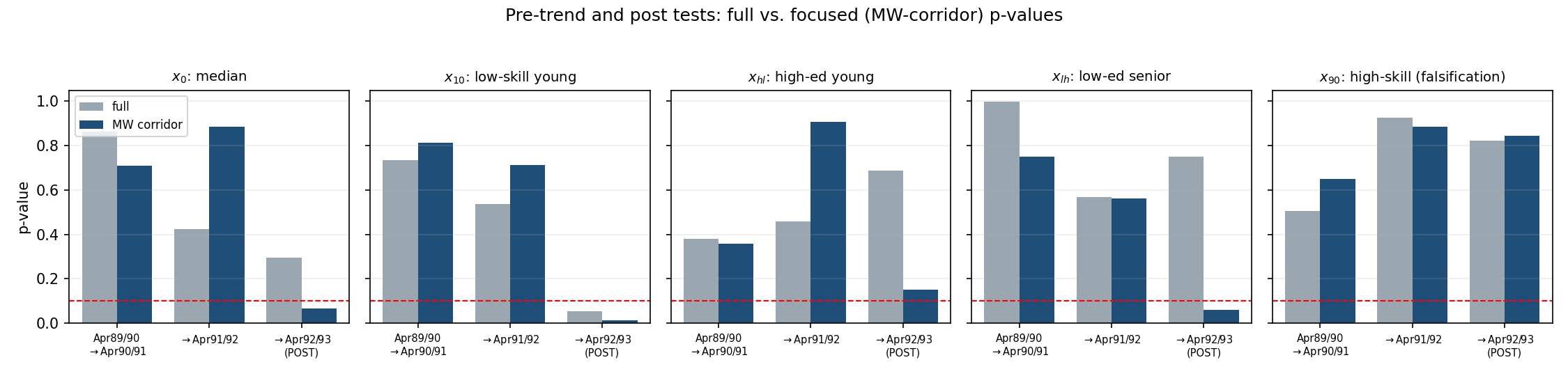}
\caption{Pre-trend (pseudo-post) and post-treatment tests at five covariate
values. Bars show $p$-values of the full-distribution test (grey) and the focused
MW-corridor test (navy) for the two pseudo-post transitions and the post period.
Dashed red line: $\alpha=0.10$. Both pre-trend transitions pass everywhere; the
post effect at $x_{10}$ rejects on the corridor.}
\label{fig:nj_pretrend}
\end{figure}

Figure~\ref{fig:nj_focused} contrasts the full-distribution and focused tests at
$x_{10}$. The focused statistic $T_n(x,t,\calY_0)$ serves two distinct roles.
In the pre-periods it is the empirical counterpart of the orthogonality condition
in Corollary~\ref{cor:orthogonality}: should a full-distribution pre-trend reject
while the corridor test passes, there is evidence that a potential pre-treatment misfit $r_n$ is orthogonal to
the treatment effect $\Delta_t$ on $\calY_0$, and $\hat f_t(x,\calY_0)$ that remains
consistent and asymptotically normal even where the Span Condition fails on the
full support $\calY$. Here both pre-trend tests pass, so this safeguard is not
needed and the corridor effect is identified without it. In the post-period the
focused test instead provides \emph{power}: It concentrates the supremum on the
interval where the policy operates. At $x_{10}$ the full-distribution test is only
marginal ($p=0.053$) while the focused test rejects clearly ($p=0.011$), with the
supremum attained inside the corridor.

\begin{figure}[ht]
\centering
\includegraphics[width=\textwidth]{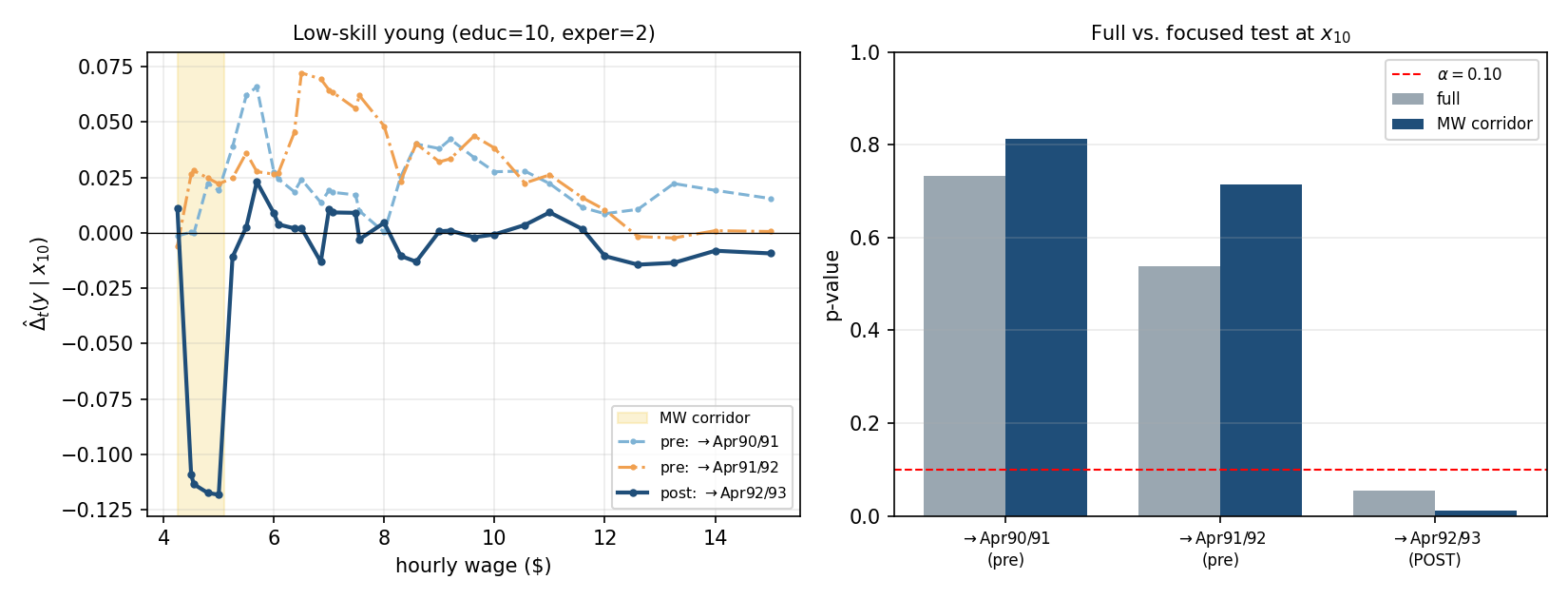}
\caption{Focused supremum test at $x_{10}$ (educ=10, exper=2). \textit{Left:}
$\hat\Delta_t(y\mid x_{10})$ for the two pre-trend transitions and the post period;
yellow band = MW corridor $[\$4.25,\$5.10]$. \textit{Right:} $p$-values of the
full-distribution test (grey) and the focused test (navy). Both pre-trend
transitions pass on $\calY$ and $\calY_0$; in the post period the corridor test
rejects ($p=0.012$) while the full test is only marginal ($p=0.054$). Dashed red
line: $\alpha=0.10$.}
\label{fig:nj_focused}
\end{figure}

\subsection{Main Results: Post-Treatment CDFs and Heterogeneity Patterns for the Treatment Effects}\label{subsec:interpretation}

Figure~\ref{fig:nj_delta} displays the pointwise CDF differences
$\hat\Delta_t(y\mid x):=\hat F_{1,t}(y\mid x)-\hat F^0_{1,t}(y\mid x)$
with 90\% pointwise confidence bands for all five evaluation points.
Table~\ref{tab:results} summarises the scalar treatment effect estimates
$\hat f_t(x,\calY)$ and the supremum test results. The effect is sharply
concentrated in the MW corridor for the young low-educated group $x_{10}$, where
$\hat\Delta_t$ falls by up to $0.12$ inside $[\$4.25,\$5.10]$. Both the focused test ($p_{\calY_0}=0.011$) and the full-distribution test ($p=0.053$) reject at the $10\%$
level. The high-education/high-experience point $x_{90}$ shows no effect on either test
($p=0.83$, $p_{\calY_0}=0.84$). We report the one-sided $90\%$ lower confidence bound
$\mathrm{LCB}=\max\{0,\hat f_t-z_{0.90}\hat{\mathrm{se}}\}$ exactly when the supremum test rejects at $10\%$. This is only the case for $x_{10}$. Figure~\ref{fig:nj_fhat} summarises the treatment effect estimates
visually.

\begin{figure}[ht]
\centering
\includegraphics[width=\textwidth]{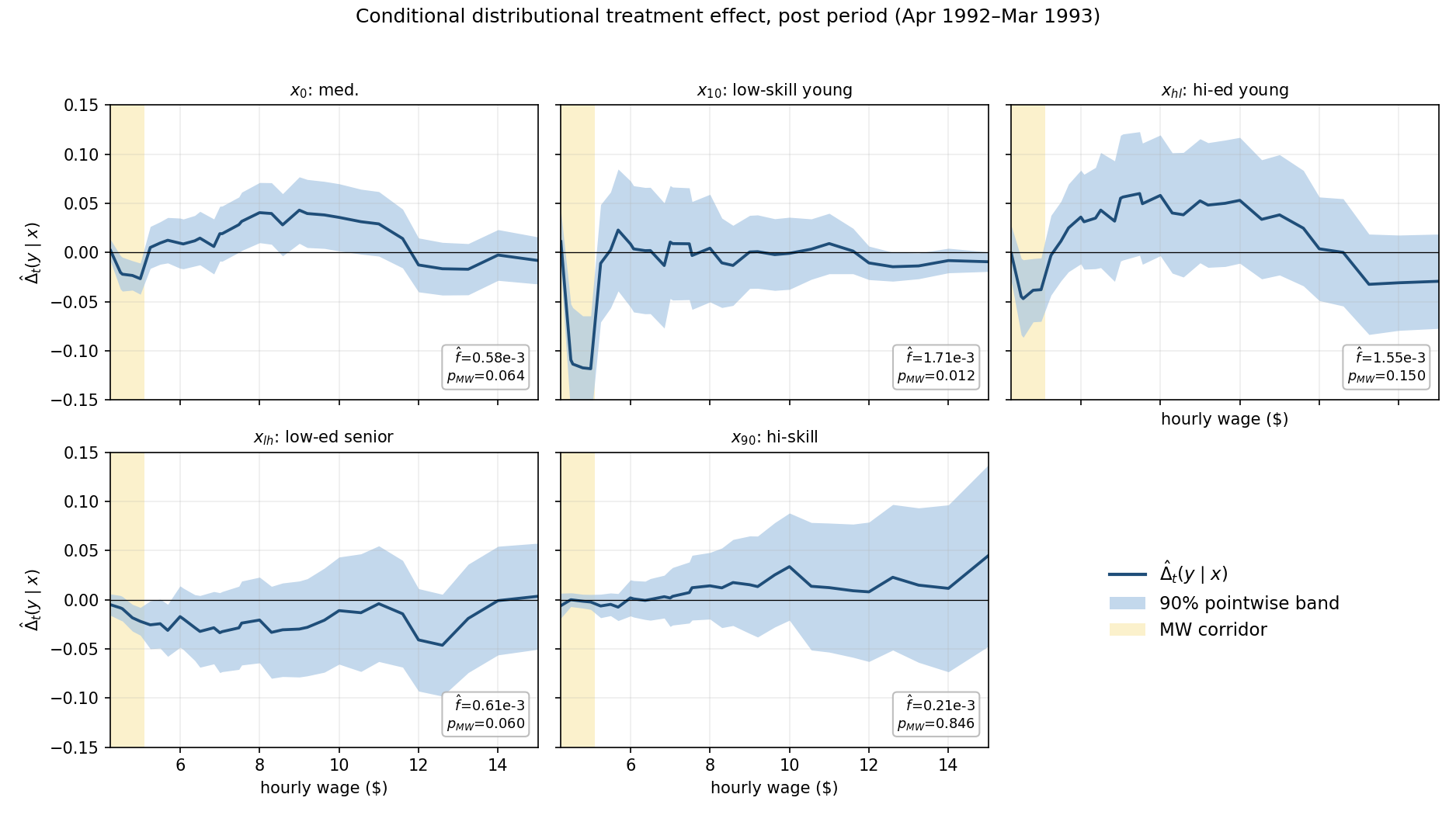}
\caption{Pointwise CDF difference $\hat\Delta_t(y\mid x)$ with 90\% pointwise
confidence band. Yellow band: MW corridor $[\$4.25,\$5.10]$.
Dotted vertical line: new NJ minimum wage $\log(\$5.10)=1.63$.}
\label{fig:nj_delta}
\end{figure}

\begin{table}[ht]
\begin{singlespace}
\centering
\caption{Treatment effect estimates at five covariate values.
$T_n=T_n(x,t,\calY)$ and $\hat p$ are full-distribution;
$\hat p_{\calY_0}$ is the focused MW-corridor test; $\hat c_{0.90}$: $90\%$ GP
critical value ($S=10{,}000$). As $f_t(x)\ge0$, the last column gives the
one-sided $90\%$ confidence interval $[\max\{0,\hat f_t-z_{0.90}\hat{\mathrm{se}}\},\infty)$;
a dash (---) marks a lower bound of $0$, i.e.\ the uninformative interval
$[0,\infty)$ (the test does not reject at $10\%$).}
\label{tab:results}
\small
\setlength{\tabcolsep}{4pt}
\begin{tabular}{@{}lccccccc@{}}
\toprule
$x$ & $\hat f_t\!\times\!10^{-3}$ & $\hat{\mathrm{se}}\!\times\!10^{-3}$ &
$T_n$ & $\hat c_{0.90}$ & $\hat p$ & $\hat p_{\calY_0}$ & 90\% CI $\times10^{-3}$ \\
\midrule
$x_0$ (ed=12, ex=10)   & 0.58 & 0.62 & 26.7 & 34.5 & 0.294 & 0.064 & --- \\
$x_{10}$ (ed=10, ex=2) & 1.71 & 1.13 & 73.0 & 64.8 & 0.054 & 0.012 & $[0.27,\infty)$ \\
$x_{hl}$ (ed=16, ex=2) & 1.55 & 1.99 & 37.1 & 68.7 & 0.686 & 0.150 & --- \\
$x_{lh}$ (ed=10, ex=37)& 0.61 & 1.08 & 28.6 & 59.4 & 0.750 & 0.060 & --- \\
$x_{90}$ (ed=16, ex=37)& 0.21 & 0.76 & 27.6 & 80.2 & 0.824 & 0.846 & --- \\
\bottomrule
\end{tabular}
\end{singlespace}
\end{table}

\begin{figure}[ht]
\centering
\includegraphics[width=.85\textwidth]{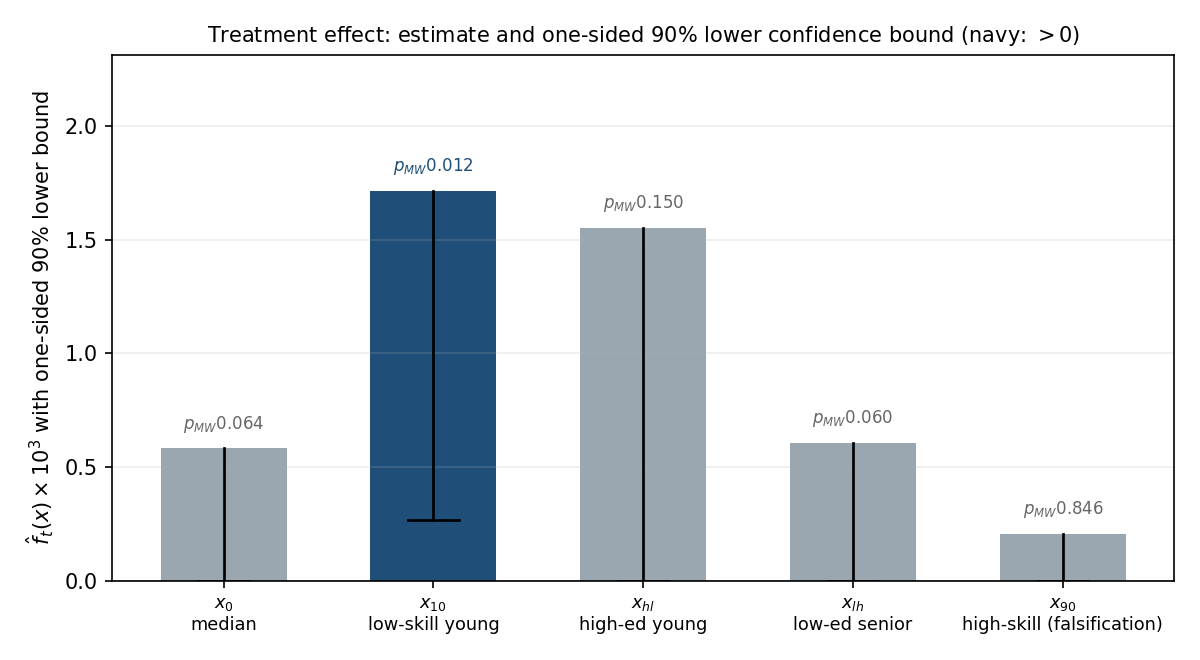}
\caption{Treatment effect $\hat f_t(x,\calY)$ at five covariate values with
the one-sided $90\%$ lower confidence bound (navy: $\mathrm{LCB}>0$); the
annotation gives the focused-test $p$-value $\hat p_{\calY_0}$.}
\label{fig:nj_fhat}
\end{figure}

\paragraph{Low-education, low-experience workers ($x_{10}$: educ=10, exper=2).}
The clearest effect is at $x_{10}$: $\hat f_t=1.71\times10^{-3}$, with the
focused test rejecting on the corridor ($p_{\calY_0}=0.012$) and the
full-distribution test rejecting at the $10\%$ level ($T_n=73.0$,
$\hat c_{0.90}=64.8$, $p=0.054$); the one-sided $90\%$ lower bound is
$0.27\times10^{-3}$ (full support) and $0.97\times10^{-3}$ (corridor). The shape
is the canonical minimum-wage spike: $\hat\Delta_t(y)$ falls by up to $0.12$
inside the corridor $[\$4.25,\$5.10]$ and is near zero outside it. Probability
mass previously below \$5.10 is shifted up to just above the new floor. The
supremum is attained within the corridor, so the focused test is the
sharper instrument here.

\paragraph{High-education, low-experience workers ($x_{hl}$: educ=16, exper=2).}
The point estimate $\hat f_t=1.55\times10^{-3}$ is sizeable but \emph{not}
significant ($T_n=37.1$, $p=0.686$; corridor $p_{\calY_0}=0.150$), with a wide
band reflecting the small high-education--low-experience cell. The pointwise confidence intervals of the CDF differences lie largely above zero, which would be consistent with the
wage-compression/ripple pattern of \citet{AutorManningSmith2016}.

\paragraph{Median worker ($x_0$: educ=12, exper=10).}
The median worker shows a small effect that is marginal on the corridor only:
$\hat f_t=0.58\times10^{-3}$ ($T_n=26.7$, $p=0.294$; $p_{\calY_0}=0.064$),
consistent with a modest direct effect for the share of median workers whose
wages were near the old floor.

\paragraph{Low-education, high-experience workers ($x_{lh}$: educ=10, exper=37).}
The estimate is small, $\hat f_t=0.61\times10^{-3}$. The full-distribution
test does not reject ($T_n=28.6$, $p=0.750$); the focused test is only marginal
($p_{\calY_0}=0.060$, rejecting at the $10\%$ level), and the one-sided lower
bound is zero ($\mathrm{LCB}=0$, i.e.\ the interval $[0,\infty)$). There is no
firmly resolved effect near the MW threshold for this group, consistent with
their wages lying well above \$5.10.

\paragraph{High-education, high-experience workers ($x_{90}$: educ=16, exper=37).}
For the high-education/high-experience point, we have no evidence for an effect: $\hat f_t=0.21\times10^{-3}$
with no rejection on either test ($T_n=27.6$, $\hat c_{0.90}=80.2$, $p=0.824$;
corridor $p_{\calY_0}=0.846$). The absence of any corridor effect for a group
whose wages lie far above the floor is as expected and supports the
identifying assumptions.

\paragraph{Overall pattern.}
The DR-SC approach reveals a clear heterogeneity pattern. Under the
April--March design the minimum-wage effect is sharply and \emph{specifically}
concentrated in the MW corridor for low-education/low-experience young workers ($x_{10}$), the group
most directly affected by the policy. For the high-education/high-experience group ($x_{90}$), there is no evidence for an effect, and the remaining groups show at most marginal corridor effects. Relative
to a calendar-year periodisation, which we have also considered, less of the response is attributed to broad
concurrent shifts, consistent with the cleaner pre-trends documented above.

This complements the aggregate DiD
evidence of \citet{CardKrueger1994}: While their design isolates average
employment effects, the DR-SC approach reveals precisely \emph{which} part of
the conditional wage distribution was affected, and for which worker groups.

Regarding the interpretation as causal effect, under PTP, $\hat f_t(x,\calY)$ estimates the total causal effect of all
NJ-specific developments relative to the synthetic control between the pre-
and post-treatment periods. For low-education/low-experience workers, the minimum wage
increase is the most plausible dominant cause. For other groups, the effects
may reflect a combination of the MW and concurrent NJ-specific developments.
The DR-SC method provides a clean characterisation of \emph{where} in the
distribution the effects occur, which is informative about the operative
mechanisms even when causal attribution is uncertain.

\subsection{Robustness: Ridge-Augmented Weights}
A variance decomposition of \eqref{sandwich} shows the pointwise bands are dominated by the
weight-estimation term $V_w$ ($40$--$60\%$ of the pointwise variance across the
five points, against only $\approx15\%$ for the treated cell), reflecting the
ill-conditioned Gram matrix ($\kappa(\hat G)\approx10^5$). As a robustness check
we re-estimate with the ridge weights of Remark~\ref{rem:ridge}, selecting
$\lambda$ by leave-one-period-out cross-validation on the pre-treatment fit.

The chosen $\lambda^*\approx3.8\times10^{-3}$ (of order $n^{-1/2}$) lowers $\kappa$ to
$1.1\times10^4$, evens out the weights (the largest falls from $0.52$ to $0.24$),
and improves the held-out pre-fit by about $20\%$, indicating that the
unregularised weights mildly overfit. The substantive conclusions are unchanged:
The effect stays concentrated in the corridor for $x_{10}$, no evidence for an effect for $x_{90}$, and all pre-trend tests still pass. Regularisation narrows the
pointwise bands by roughly a third and sharpens the $x_{10}$ inference (its
one-sided lower bound moves well above zero). Because $\lambda^*$ introduces a
small, deliberate shrinkage of the weights towards equality (in the spirit of
augmented synthetic control \citep{BenMichaelEtAl2021}), we report it as a
robustness analysis rather than the main specification.

\section{Conclusion}
\label{sec:conclusion}
%% ---------------------------------------------------------------

We have proposed a SC estimator for conditional distribution functions in the
semiparametric DR framework, with three main contributions. First, the Parallel
Trends in Parameters assumption keeps the counterfactual within the model class, and
dropping non-negativity on weights yields a closed-form estimator. Second, both DR
estimation error and weight estimation error contribute at the same $\sqrt{n}$ rate
to the variance of the counterfactual, and both components are characterised
explicitly. Third, a two-stage inference procedure is proposed: a supremum test whose null distribution is approximated by Gaussian process simulation, followed by a plug-in confidence interval for the integrated difference when the null is rejected. The supremum statistic has a valid non-degenerate limit under $H_0$, grows to infinity under $H_1$, and yields $p$-values that fully exploit the large-$n$ precision of the DR estimators.

The New Jersey application illustrates how the proposed framework can uncover heterogeneous distributional patterns across the covariate space. The test statistic $T_n(x,t,\calY_0)$ with point-specific critical values from GP simulation reveals that the estimated effect is concentrated in the minimum-wage corridor for young low-educated workers ($x_{10}$: focused $p_{\calY_0}=0.012$ vs.\ full $p=0.054$), consistent with a direct minimum wage effect, while the other groups show at most marginal corridor effects and the high-education/experience group does not show any noticeable effect ($p_{\calY_0}=0.846$). Pre-trend tests pass on both the full distribution and the corridor. These patterns are invisible in aggregate SC analyses.

Future directions include the theory and applications of uniform confidence bands
for $f(x)$ over a covariate set $\calX$, enabling formal tests for
x-heterogeneity of the treatment effect, and a nonparametric estimation of the link function in the DR model.

\bibliographystyle{apalike}
\bibliography{SC_DR_refs}

\begin{thebibliography}{}

\bibitem[Abadie, 2021]{Abadie2021}
Abadie, A. (2021).
\newblock Using synthetic controls: Feasibility, data requirements, and methodological aspects.
\newblock {\em Journal of Economic Literature}, 59(2):391--425.

\bibitem[Abadie et~al., 2010]{AbadieEtAl2010}
Abadie, A., Diamond, A., and Hainmueller, J. (2010).
\newblock Synthetic control methods for comparative case studies: Estimating the effect of {C}alifornia's tobacco control program.
\newblock {\em Journal of the American Statistical Association}, 105:493--505.

\bibitem[Abadie and Gardeazabal, 2003]{AbadieGardeazabal2003}
Abadie, A. and Gardeazabal, J. (2003).
\newblock The economic costs of conflict: A case study of the {B}asque country.
\newblock {\em American Economic Review}, 93:113--132.

\bibitem[Arkhangelsky et~al., 2021]{ArkhangelskEtAl2021}
Arkhangelsky, D., Athey, S., Hirshberg, D.~A., Imbens, G.~W., and Wager, S. (2021).
\newblock Synthetic difference-in-differences.
\newblock {\em American Economic Review}, 111:4088--4118.

\bibitem[Autor et~al., 2013]{AutorDornHanson2013}
Autor, D.~H., Dorn, D., and Hanson, G.~H. (2013).
\newblock The china syndrome: Local labor market effects of import competition in the united states.
\newblock {\em American Economic Review}, 103(6):2121--2168.

\bibitem[Autor et~al., 2016]{AutorManningSmith2016}
Autor, D.~H., Manning, A., and Smith, C.~L. (2016).
\newblock The contribution of the minimum wage to {US} wage inequality over three decades: A reassessment.
\newblock {\em American Economic Journal: Applied Economics}, 8(1):58--99.

\bibitem[Ben-Michael et~al., 2021]{BenMichaelEtAl2021}
Ben-Michael, E., Feller, A., and Rothstein, J. (2021).
\newblock The augmented synthetic control method.
\newblock {\em Journal of the American Statistical Association}, 116(536):1789--1803.

\bibitem[Biewen and Erhardt, 2025]{BiewenErhardt2025}
Biewen, M. and Erhardt, P. (2025).
\newblock Using post-regularization distribution regression to measure the effects of a minimum wage on hourly wages, hours worked and monthly earnings.
\newblock {\em The Econometrics Journal}, page forthcoming.

\bibitem[Blanchard and Katz, 1992]{BlanchardKatz1992}
Blanchard, O.~J. and Katz, L.~F. (1992).
\newblock Regional evolutions.
\newblock {\em Brookings Papers on Economic Activity}, 1992(1):1--75.

\bibitem[Callaway and Li, 2019]{CallawayLi2019}
Callaway, B. and Li, T. (2019).
\newblock Quantile treatment effects in difference in differences models with panel data.
\newblock {\em Quantitative Economics}, 10:1579--1618.

\bibitem[Card and Krueger, 1994]{CardKrueger1994}
Card, D. and Krueger, A.~B. (1994).
\newblock Minimum wages and employment: A case study of the fast-food industry in new jersey and pennsylvania.
\newblock {\em American Economic Review}, 84(4):772--793.

\bibitem[Card and Lemieux, 2001]{CardLemieux2001}
Card, D. and Lemieux, T. (2001).
\newblock Can falling supply explain the rising return to college for younger men? a cohort-based analysis.
\newblock {\em Quarterly Journal of Economics}, 116(2):705--746.

\bibitem[Chen, 2023]{Chen2023}
Chen, J. (2023).
\newblock Synthetic control as online linear regression.
\newblock {\em Econometrica}, 91(2):465--491.

\bibitem[Chen and Feng, 2026]{ChenFeng2026}
Chen, S. and Feng, J. (2026).
\newblock Group-heterogeneous changes-in-changes and distributional synthetic controls.
\newblock {\em arXiv}, 2307.15313v2.

\bibitem[Chernozhukov et~al., 2010]{Chernozhukov2010}
Chernozhukov, V., Fern{\'a}ndez-Val, I., and Galichon, A. (2010).
\newblock Quantile and probability curves without crossing.
\newblock {\em Econometrica}, 78(3):1093--1125.

\bibitem[Chernozhukov et~al., 2013]{ChernozhFVMelly2013}
Chernozhukov, V., Fern{\'a}ndez-Val, I., and Melly, B. (2013).
\newblock Inference on counterfactual distributions.
\newblock {\em Econometrica}, 81:2205--2268.

\bibitem[Chernozhukov et~al., 2021]{ChernozhukovEtAl2021}
Chernozhukov, V., W{\"u}thrich, K., and Zhu, Y. (2021).
\newblock An exact and robust conformal inference method for counterfactual and synthetic controls.
\newblock {\em Journal of the American Statistical Association}, 116(536):1849--1864.

\bibitem[Dette et~al., 2025]{dette2025}
Dette, H., M{\"o}llenhoff, K., and Wied, D. (2025).
\newblock Practically significant differences between conditional distribution functions.
\newblock {\em arXiv}, 2506.06545.

\bibitem[DiNardo et~al., 1996]{DiNardoFortinLemieux1996}
DiNardo, J., Fortin, N.~M., and Lemieux, T. (1996).
\newblock Labor market institutions and the distribution of wages, 1973--1992: A semiparametric approach.
\newblock {\em Econometrica}, 64(5):1001--1044.

\bibitem[Doudchenko and Imbens, 2016]{DoudchenkoImbens2016}
Doudchenko, N. and Imbens, G.~W. (2016).
\newblock Balancing, regression, difference-in-differences and synthetic control methods: A synthesis.
\newblock NBER Working Paper No.~22791.

\bibitem[Ferman and Pinto, 2021]{FermanPinto2021}
Ferman, B. and Pinto, C. (2021).
\newblock Synthetic controls with imperfect pre-treatment fit.
\newblock {\em Quantitative Economics}, 12(4):1197--1221.

\bibitem[Fernández-Val et~al., 2026]{Fernandezval2026}
Fernández-Val, I., Meier, J., van Vuuren, A., and Vella, F. (2026).
\newblock A simple distributional difference-in-differences estimator for univariate and bivariate outcomes.
\newblock {\em arXiv}, 2409.02311v3.

\bibitem[Firpo et~al., 2009]{FirpoFortinLemieux2009}
Firpo, S., Fortin, N.~M., and Lemieux, T. (2009).
\newblock Unconditional quantile regressions.
\newblock {\em Econometrica}, 77(3):953--973.

\bibitem[Flood et~al., 2025]{flood2025ipums}
Flood, S., King, M., Rodgers, R., Ruggles, S., Warren, J.~R., Backman, D., Breton, E., Cooper, G., Rivera~Drew, J.~A., Richards, S., Van~Riper, D., and Williams, K. C.~W. (2025).
\newblock {IPUMS CPS: Version 13.0 [Dataset]}.

\bibitem[Foresi and Peracchi, 1995]{ForesiPeracchi1995}
Foresi, S. and Peracchi, F. (1995).
\newblock The conditional distribution of excess returns: An empirical analysis.
\newblock {\em Journal of the American Statistical Association}, 90:451--466.

\bibitem[Gunsilius, 2023]{Gunsilius2023}
Gunsilius, F.~F. (2023).
\newblock Distributional synthetic controls.
\newblock {\em Econometrica}, 91:1105--1127.

\bibitem[Katz and Murphy, 1992]{KatzMurphy1992}
Katz, L.~F. and Murphy, K.~M. (1992).
\newblock Changes in relative wages, 1963--1987: Supply and demand factors.
\newblock {\em Quarterly Journal of Economics}, 107(1):35--78.

\bibitem[Klein, 2024]{Klein2024}
Klein, N. (2024).
\newblock Distributional regression for data analysis.
\newblock {\em Annual Review of Statistics and Its Application}, 11:321--346.

\bibitem[Kneib et~al., 2023]{KneibEtAl2023}
Kneib, T., Silbersdorff, A., and S{\"a}fken, B. (2023).
\newblock Rage against the mean---{A} review of distributional regression approaches.
\newblock {\em Econometrics and Statistics}, 26:99--123.

\bibitem[Lechner, 2011]{Lechner2011}
Lechner, M. (2011).
\newblock The estimation of causal effects by difference-in-difference methods.
\newblock {\em Foundations and Trends in Econometrics}, 4(3):165--224.

\bibitem[Machado and Mata, 2005]{MachadoMata2005}
Machado, J. A.~F. and Mata, J. (2005).
\newblock Counterfactual decomposition of changes in wage distributions using quantile regression.
\newblock {\em Journal of Applied Econometrics}, 20(4):445--465.

\bibitem[Melly, 2005]{Melly2005}
Melly, B. (2005).
\newblock Decomposition of differences in distribution using quantile regression.
\newblock {\em Labour Economics}, 12(4):577--590.

\bibitem[Neumark and Wascher, 1992]{NeumarkWascher1992}
Neumark, D. and Wascher, W. (1992).
\newblock Employment effects of minimum and subminimum wages: Panel data on state minimum wage laws.
\newblock {\em ILR Review}, 46(1):55--81.

\bibitem[Rothe and Wied, 2013]{RotheWied2013}
Rothe, C. and Wied, D. (2013).
\newblock Misspecification testing in a class of conditional distributional models.
\newblock {\em Journal of the American Statistical Association}, 108:314--324.

\bibitem[Spady and Stouli, 2025]{spady:2025}
Spady, R. and Stouli, S. (2025).
\newblock Gaussian transforms modeling and the estimation of distributional regression functions.
\newblock {\em Econometrica}, 93(5):1885--1913.

\bibitem[van~der Vaart and Wellner, 1996]{VdVWellner1996}
van~der Vaart, A.~W. and Wellner, J.~A. (1996).
\newblock {\em Weak Convergence and Empirical Processes}.
\newblock Springer, New York.

\bibitem[Wied, 2024]{Wied2024}
Wied, D. (2024).
\newblock Semiparametric distribution regression with instruments and monotonicity.
\newblock {\em Labour Economics}, 90:102565.

\end{thebibliography}

%% ---------------------------------------------------------------
\appendix
\section{Technical Lemmas}
\label{app:lemmas}
%% ---------------------------------------------------------------

The following lemmas underlie the proof of Theorem~\ref{thm:main}. A direct consequence of \citet{ChernozhFVMelly2013}, Theorem~5.2, is

\begin{lemma}[DR convergence]\label{lem:DRconv}
 Under Assumptions~\ref{ass:sampling}--\ref{ass:regularity}, for each $(i,t)$:
\[
\sqrt{n_{it}}(\hat{\theta}_{it}(\cdot)-\theta_{it}(\cdot))\rightsquigarrow\mathbb{G}_{it}(\cdot) \quad \text{in } l^{\infty}(\mathcal{Y})^{p},
\]
where $\mathbb{G}_{it}$ is a zero-mean Gaussian process with covariance $\mathcal{I}_{it}(y)^{-1}\Sigma_{it}(y,y')\mathcal{I}_{it}(y')^{-1}$, with the Fisher information
$\mathcal{I}_{it}(y)=\E\!\left[\lambda(X'\theta_{it}(y))^2\big/\bigl(\Lambda(X'\theta_{it}(y))(1-\Lambda(X'\theta_{it}(y)))\bigr)\,XX'\right]$ and
\begin{eqnarray*}
\Sigma_{it}(y,y')&=& \E\!\left[\frac{\lambda(X'\theta_{it}(y))\,\lambda(X'\theta_{it}(y'))}{\Lambda(X'\theta_{it}(y)) (1-\Lambda(X'\theta_{it}(y)))\,\Lambda(X'\theta_{it}(y'))(1-\Lambda(X'\theta_{it}(y')))} \right. \cdot \\ && \left. \left(\mathbf{1}\{Y\le y\wedge y'\}-F_{it}(y|X)F_{it}(y'|X)\right)XX'\right].
\end{eqnarray*}
The factors $1/(\Lambda(1-\Lambda))$ in $\mathcal{I}_{it}$ and $\lambda\lambda'/(\Lambda(1-\Lambda)\Lambda'(1-\Lambda'))$ in $\Sigma_{it}$ are the standardized scores of the binary log-likelihood~\eqref{eq:DR_estim}; at $y=y'$ the information identity gives $\mathcal{I}_{it}(y)^{-1}\Sigma_{it}(y,y)\mathcal{I}_{it}(y)^{-1}=\mathcal{I}_{it}(y)^{-1}$.
The processes $\mathbb{G}_{it}$ are mutually independent by Assumption~\ref{ass:sampling}. Stacking and scaling by $\sqrt{n}$ yields joint convergence in $l^\infty(\calY_0)^p$: $\sqrt{n}(\hat\theta_{it}-\theta_{it})\wto\mathbb{H}_{it}$ in $l^\infty(\calY_0)^p$ with $\Cov(\mathbb{H}_{it}(y),\mathbb{H}_{it}(y'))=r_{it}^{-1}\,\mathcal{I}_{it}(y)^{-1}\Sigma_{it}(y,y')\mathcal{I}_{it}(y')^{-1}$.
\end{lemma}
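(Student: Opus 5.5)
The lemma is described as a direct consequence of \citet{ChernozhFVMelly2013}, Theorem~5.2, so the plan is to verify its hypotheses cell by cell and then assemble the joint statement. I would first fix a cell $(i,t)$. By Assumption~\ref{ass:sampling} its data are i.i.d., and by Assumption~\ref{ass:regularity} the binary log-likelihood in \eqref{eq:DR_estim} has a unique maximizer $\theta_{it}(y)$ for every $y\in\calY_0$, and Condition DR holds. I would then write the score
\[
s_{it}(\theta,y;Y,X)=\frac{\lambda(X'\theta)}{\Lambda(X'\theta)(1-\Lambda(X'\theta))}\bigl(\mathbf{1}\{Y\le y\}-\Lambda(X'\theta)\bigr)X
\]
and its derivative at $\theta_{it}(y)$. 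Using $\E[\mathbf{1}\{Y\le y\}\mid X]=\Lambda(X'\theta_{it}(y))$, the expectation of that derivative is $-\mathcal{I}_{it}(y)$.

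Next I would establish the uniform Bahadur representation
\[
\sqrt{n_{it}}\bigl(\hat\theta_{it}(y)-\theta_{it}(y)\bigr)=\mathcal{I}_{it}(y)^{-1}\frac{1}{\sqrt{n_{it}}}\sum_k s_{it}(\theta_{it}(y),y;Y_{itk},X_{itk})+o_p(1)
\]
uniformly in $y$. This is exactly what Theorem~5.2 of \citet{ChernozhFVMelly2013} delivers under Condition DR, via Z-estimation and the functional delta method for the argmax map. The score class indexed by $y$ is Donsker because $\{\mathbf{1}\{Y\le y\}\}$ is VC and the weight functions are Lipschitz in $\theta_{it}(y)$. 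The point I expect to need the most care, though it is still routine, is computing the covariance of the score process. For $y$ and $y'$, the conditional covariance of the indicators given $X$ is $F_{it}(y\wedge y'\mid X)-F_{it}(y\mid X)F_{it}(y'\mid X)$. Taking expectations of this against the weight product gives $\Sigma_{it}(y,y')$ as displayed, reading the indicator term in its conditional-mean form. Sandwiching yields the stated covariance $\mathcal{I}_{it}(y)^{-1}\Sigma_{it}(y,y')\mathcal{I}_{it}(y')^{-1}$. At $y=y'$ the variance factor is $\Lambda(1-\Lambda)$, which cancels one denominator, so $\Sigma_{it}(y,y)=\mathcal{I}_{it}(y)$ and the information identity follows.

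Finally, for the joint statement: the cells are independent by Assumption~\ref{ass:sampling}, and each marginal process is asymptotically tight. Hence the stacked vector converges jointly in the product space to independent limits; this follows from independence together with the product-space result, Lemma~1.4.3 of \citet{VdVWellner1996}. Rescaling gives
\[
\sqrt{n}\bigl(\hat\theta_{it}-\theta_{it}\bigr)=\sqrt{n/n_{it}}\,\sqrt{n_{it}}\bigl(\hat\theta_{it}-\theta_{it}\bigr),
\]
where $\sqrt{n/n_{it}}\to r_{it}^{-1/2}$. Slutsky's lemma then gives $\mathbb{H}_{it}=r_{it}^{-1/2}\mathbb{G}_{it}$ with covariance scaled by $r_{it}^{-1}$. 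Restricting from $\calY$ to $\calY_0$ is continuous, so the convergence carries over to $\linf(\calY_0)^p$.
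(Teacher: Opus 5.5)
Your proposal is correct and follows the paper's route. The paper just invokes Theorem~5.2 of \citet{ChernozhFVMelly2013} under Condition DR, takes mutual independence of the limits from Assumption~\ref{ass:sampling}, and rescales via $n_{it}/n\to r_{it}$; your explicit score and Hessian computations, the derivation of $\Sigma_{it}(y,y)=\mathcal{I}_{it}(y)$, and the product-space argument only spell out details the paper leaves implicit (one small point: Assumption~\ref{ass:regularity} is stated on $\calY_0$, so the convergence you actually obtain is directly in $\linf(\calY_0)^p$, and no restriction step is needed).
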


\begin{lemma}\label{lemma:gram}
(Gram matrix central limit theorem.) Under Assumptions~\ref{ass:sampling}--\ref{ass:regularity},
$\sqrt{n}(\mathrm{vec}(\hat{G})-\mathrm{vec}(G^{*}),\hat{c}-c^{*})\dto\mathcal{N}(0,\Sigma_{G,c})$,
where $\Sigma_{G,c} = O(T_0^{-1})$ is the asymptotic covariance of the time-averaged sample means
\begin{equation*}
\hat{G}_{kl} = (T_0 m)^{-1} \sum_{t=1}^{T_0} \sum_{l'=1}^{m} \hat{\theta}_{k,t}(y_{l'})' \hat{\theta}_{l,t}(y_{l'}), \quad
\hat{c}_k = (T_0 m)^{-1} \sum_{t=1}^{T_0} \sum_{l'=1}^{m} \hat{\theta}_{1,t}(y_{l'})' \hat{\theta}_{k,t}(y_{l'}).
\end{equation*}
Explicitly, by mutual independence of the cells $(i,t)$ (Assumption~\ref{ass:sampling}),
\begin{equation}\label{eq:SigmaGc}
\Sigma_{G,c}=\frac{1}{T_0^{2}}\sum_{t=1}^{T_0}\sum_{i=1}^{J+1}\frac{1}{r_{i,t}}\,\Phi_{i,t}
=O\!\bigl(T_0^{-1}\bigr),\
\Phi_{i,t}=\frac{1}{m^{2}}\sum_{l'=1}^{m}\sum_{l''=1}^{m}A_{i,t}(y_{l'})\,\Omega_{i,t}(y_{l'},y_{l''})\,A_{i,t}(y_{l''})',
\end{equation}
where $\Omega_{i,t}(y,y')=\mathcal{I}_{i,t}(y)^{-1}\Sigma_{i,t}(y,y')\mathcal{I}_{i,t}(y')^{-1}\in\R^{p\times p}$ is the DR covariance of Lemma~\ref{lem:DRconv} and $A_{i,t}(y)\in\R^{(J^2+J)\times p}$ is the loading \emph{matrix} of $\mathbb{H}_{i,t}(y)$ in the linearisation~\eqref{eq:linGc} below: its row associated with the Gram coordinate $(k,l)$ is $\theta_{l,t}(y)'\mathbf{1}\{i=k\}+\theta_{k,t}(y)'\mathbf{1}\{i=l\}$, and its row associated with the $c_k$ coordinate is $\theta_{1,t}(y)'\mathbf{1}\{i=k\}+\theta_{k,t}(y)'\mathbf{1}\{i=1\}$ (each a $1\times p$ row). Hence $\Phi_{i,t}\in\R^{(J^2+J)\times(J^2+J)}$ and $\Sigma_{G,c}\in\R^{(J^2+J)\times(J^2+J)}$, conformable with $\sqrt n(\vecop(\hat G-G^*),\hat c-c^*)\in\R^{J^2+J}$. The factor $1/r_{i,t}$ is the pre-period analogue of the $1/r_{i,t}$ in~\eqref{eq:cov0}, and the $T_0^{-2}$ prefactor over $T_0$ independent periods yields the $O(T_0^{-1})$ rate.
\end{lemma}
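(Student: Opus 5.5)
The plan is to treat $(\vecop(\hat G),\hat c)$ as a fixed finite-dimensional functional of the $(J+1)T_0$ pre-period DR estimators on the grid $\calY_m$, and then to apply the delta method.

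\emph{Step 1: joint convergence.} By Lemma~\ref{lem:DRconv} and independence across cells (Assumption~\ref{ass:sampling}), the stacked vector $\sqrt n(\hat\theta_{i,t}(y_{l'})-\theta_{i,t}(y_{l'}))_{i\le J+1,\,t\le T_0,\,l'\le m}$ converges jointly to the Gaussian vector $(\H_{i,t}(y_{l'}))$. Here the components with different $(i,t)$ are independent. Within a cell, the covariance is $r_{i,t}^{-1}\Omega_{i,t}(y_{l'},y_{l''})$. This step uses only weak convergence in $\linf(\calY_0)^p$ followed by evaluation at finitely many grid points, which is a continuous projection.

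\emph{Step 2: linearisation.} Each entry $\hat G_{kl}$ and $\hat c_k$ is a finite average of bilinear forms $\hat\theta_{k,t}(y)'\hat\theta_{l,t}(y)$. Writing $\hat\theta=\theta+n^{-1/2}h_n$, we get
\[
\hat\theta_{k,t}'\hat\theta_{l,t}-\theta_{k,t}'\theta_{l,t}=n^{-1/2}\bigl(\theta_{l,t}'h_{k}+\theta_{k,t}'h_{l}\bigr)+n^{-1}h_k'h_l .
\]
Since $h_n=O_p(1)$ by Step 1, the quadratic remainder is $O_p(n^{-1})$ and vanishes after scaling by $\sqrt n$. This gives the linearisation~\eqref{eq:linGc}:
\[
\sqrt n\bigl(\vecop(\hat G-G^*),\hat c-c^*\bigr)=\frac{1}{T_0 m}\sum_{t=1}^{T_0}\sum_{i=1}^{J+1}\sum_{l'=1}^m A_{i,t}(y_{l'})\sqrt n\bigl(\hat\theta_{i,t}(y_{l'})-\theta_{i,t}(y_{l'})\bigr)+o_p(1).
\]
The rows of $A_{i,t}$ are read off from the indicator structure: cell $i$ enters coordinate $(k,l)$ only if $i=k$ or $i=l$, and enters $c_k$ only if $i=k$ or $i=1$. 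The case $i=k=l$ gives $2\theta_{k,t}'$, which is consistent with the stated row formula. Here $G^*$ and $c^*$ are understood as the grid (Riemann-sum) analogues; I would remark that these coincide with the integrals in Assumption~\ref{ass:gram} up to a discretisation error. That error is either treated as part of the definition or assumed negligible, i.e.\ $o(n^{-1/2})$.

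\emph{Step 3: limit and covariance.} By the continuous mapping theorem (Slutsky), the limit is the linear map above applied to the Gaussian vector, so it is mean-zero normal. Independence across $(i,t)$ kills all cross terms. This leaves
\[
\frac{1}{T_0^2 m^2}\sum_{t}\sum_i r_{i,t}^{-1}\sum_{l',l''}A_{i,t}(y_{l'})\,\Omega_{i,t}(y_{l'},y_{l''})\,A_{i,t}(y_{l''})' ,
\]
which is exactly~\eqref{eq:SigmaGc}.

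\emph{Step 4: the rate.} The $O(T_0^{-1})$ order follows because there are $T_0$ summands, each bounded uniformly in $t$, under a $T_0^{-2}$ prefactor. This requires $\sup_{t}\|\Phi_{i,t}\|<\infty$ and $r_{i,t}$ bounded away from zero, which I would state as implicit uniform-boundedness conditions.

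\emph{Main obstacle.} The analysis itself is routine; the only delicate point is bookkeeping. One must make sure that the joint weak convergence holds across all cells simultaneously, which follows from independence and tightness of each marginal. One must also identify the rows of $A_{i,t}$ correctly in the diagonal cases $i=k=l$ and $k=1$, and make the grid-versus-integral convention for $(G^*,c^*)$ explicit.
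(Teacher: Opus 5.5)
Your Steps 1--3 are the paper's own argument. You view $(\vecop(\hat G),\hat c)$ as bilinear functions of the grid-point DR estimates, linearise to get \eqref{eq:linGc}, insert the jointly convergent, mutually independent $\H_{i,t}$ of Lemma~\ref{lem:DRconv}, and collect the cell-wise blocks to obtain \eqref{eq:SigmaGc}. Your explicit $n^{-1}h_k'h_l$ remainder is just the elementary form of the paper's delta-method step, and your bookkeeping for the rows of $A_{i,t}$ is correct. On the centring: with $m$ fixed, the difference between the Riemann-sum and integral versions of $(G^*,c^*)$ is a fixed constant, so it is $o(n^{-1/2})$ only if it is zero. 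Only your first option works: define $G^*,c^*$ on the grid, as the paper does for $f_t$ in the proof of Theorem~\ref{thm:main}(c).

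Step 4, however, does not go through. The condition you postulate, $r_{i,t}$ bounded away from zero uniformly in $T_0$, cannot hold. Since $n=\sum_{i,t}n_{i,t}$, the shares satisfy $\sum_{i,t}r_{i,t}=1$ over $(J+1)(T_0+T_1)$ cells. Moreover, the rate itself fails under the $\sqrt n$ normalisation. The row of $A_{i,t}$ for the Gram coordinate $(k,k)$ is $2\theta_{k,t}'\mathbf{1}\{i=k\}$, so only cell $k$ contributes to that diagonal entry of $\Sigma_{G,c}$. Cauchy--Schwarz gives $\sum_{t\le T_0}r_{k,t}^{-1}\ge T_0^2/\sum_{t\le T_0}r_{k,t}\ge T_0^2$, whence
\[
\Sigma_{G,c}[(k,k),(k,k)]=\frac{1}{T_0^{2}}\sum_{t=1}^{T_0}\frac{\Phi_{k,t}[(k,k),(k,k)]}{r_{k,t}}\;\ge\;\min_{t\le T_0}\Phi_{k,t}[(k,k),(k,k)].
\]
This bound does not shrink as $T_0$ grows; with balanced cells the entry is of order $(J+1)(1+T_1/T_0)$.

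What is true is the statement at fixed cell sizes: $\Sigma_{G,c}/n\approx T_0^{-2}\sum_{t}\sum_{i}n_{i,t}^{-1}\Phi_{i,t}=O\bigl((T_0\min_{i,t}n_{i,t})^{-1}\bigr)$. Equivalently, $\Sigma_{G,c}=O(T_0^{-1})$ holds if $n$ is reinterpreted as a common per-cell scale whose shares need not sum to one. The paper's one-line justification (the $T_0^{-2}$ prefactor over $T_0$ periods) has the same defect. You should therefore prove this corrected version, or make the change of normalisation explicit, rather than assume bounded $1/r_{i,t}$.
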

\textit{Proof.} $\hat G$ and $\hat c$ are smooth (bilinear) functionals of the fixed-grid DR estimates. Linearizing and inserting the common-scale processes $\mathbb{H}_{i,t}=r_{i,t}^{-1/2}\mathbb{G}_{i,t}$ of Lemma~\ref{lem:DRconv},
\begin{equation}\label{eq:linGc}
\begin{aligned}
\sqrt{n}(\hat G_{kl}-G^*_{kl})&=\frac{1}{T_0m}\sum_{t=1}^{T_0}\sum_{l'=1}^{m}\bigl[\theta_{k,t}(y_{l'})'\mathbb{H}_{l,t}(y_{l'})+\mathbb{H}_{k,t}(y_{l'})'\theta_{l,t}(y_{l'})\bigr]+o_p(1),\\
\sqrt{n}(\hat c_{k}-c^*_{k})&=\frac{1}{T_0m}\sum_{t=1}^{T_0}\sum_{l'=1}^{m}\bigl[\theta_{1,t}(y_{l'})'\mathbb{H}_{k,t}(y_{l'})+\mathbb{H}_{1,t}(y_{l'})'\theta_{k,t}(y_{l'})\bigr]+o_p(1).
\end{aligned}
\end{equation}
The displayed maps are continuous linear functionals of the independent Gaussian processes $\mathbb{H}_{i,t}$ ($\Cov(\mathbb{H}_{i,t}(y),\mathbb{H}_{i,t}(y'))=r_{i,t}^{-1}\Omega_{i,t}(y,y')$); the functional delta method gives joint asymptotic normality, and collecting the independent cell contributions gives~\eqref{eq:SigmaGc}. \qed

The proof of the following two theorems can be found in Appendix~\ref{app:proofs}.

\begin{theorem}[Asymptotic normality of the weight estimator]
\label{thm:weights}
Under Assumptions~\ref{ass:sampling}--\ref{ass:gram}, $\sqrt{n}(\hat{w}-w^{*})\rightsquigarrow\mathcal{N}(0,V_{w})$ where $V_{w}=J_{w}\Sigma_{G,c} J_{w}^{\prime}$ and $J_{w}$ is the Jacobian of $(G,c)\mapsto w^{*}(G,c)$ at $(G^{*},c^{*})$:
\begin{align}
\label{eq:jacobian_w}
\frac{\partial w^*}{\partial c} &= G^{*-1} - \frac{G^{*-1}\mathbf{1}\mathbf{1}'G^{*-1}}{\mathbf{1}'G^{*-1}\mathbf{1}}, \qquad
\frac{\partial w^*}{\partial \mathrm{vec}(G)} = -\,w^{*\prime}\otimes\frac{\partial w^*}{\partial c}.
\end{align}
\end{theorem}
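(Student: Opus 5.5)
The plan is to write $\hat w=w^*(\hat G,\hat c)$ as a smooth function of $(\hat G,\hat c)$, combine Lemma~\ref{lemma:gram} with the delta method, and then compute the Jacobian~\eqref{eq:jacobian_w} by implicit differentiation of the first-order conditions.

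\emph{Step 1: the map and the centring.} Define $\varphi(G,c):=G^{-1}c-G^{-1}\mathbf{1}\,(\mathbf{1}'G^{-1}c-1)/(\mathbf{1}'G^{-1}\mathbf{1})$ on the open set of positive definite $J\times J$ matrices times $\R^J$. By~\eqref{eq:wexplicit}, $\hat w=\varphi(\hat G,\hat c)$ whenever $\hat G$ is invertible. Lemma~\ref{lemma:gram} gives $\hat G\pto G^*$, and Assumption~\ref{ass:gram} says $G^*$ is positive definite, so this holds with probability tending to one.

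Next I identify the centring point, i.e.\ I check that $\varphi(G^*,c^*)=w^*$. By Assumption~\ref{ass:span}, $\theta_{1,t}=\sum_i w_i^*\theta_{i,t}$ on $\calY_0$ for every pre-period. Substituting into the definition of $c^*$ gives $c^*=G^*w^*$. The population criterion $w'G^*w-2w'c^*+\text{const}$ under $\mathbf{1}'w=1$ is then minimised uniquely at $w^*$: the minimum value is zero, $w^*$ is feasible, and $G^*$ is positive definite. Hence $\varphi(G^*,c^*)=w^*$. One detail needs care here. $G^*$ and $c^*$ must be read as the grid versions ($m^{-1}\sum_{l'}$) that match $\hat G$ and $\hat c$, as in Lemma~\ref{lemma:gram}. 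The span identity holds pointwise for almost all $y$, so I would take it on the grid points, which is implicit in the assumption for a fixed grid.

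\emph{Step 2: differentiability and the Jacobian.} $\varphi$ is a rational function of the entries of $(G,c)$ with nonvanishing denominators near $(G^*,c^*)$, so it is continuously differentiable there. To get the derivative I would differentiate the KKT system $Gw-c+\mu\mathbf{1}=0$, $\mathbf{1}'w=1$ rather than the closed form. Set $P:=G^{-1}-G^{-1}\mathbf{1}\mathbf{1}'G^{-1}/(\mathbf{1}'G^{-1}\mathbf{1})$. Perturbing $c$ alone gives $dw=P\,dc$, which is the first formula in~\eqref{eq:jacobian_w}. Perturbing $G$ gives $G\,dw+d\mu\,\mathbf{1}=-dG\,w$ with $\mathbf{1}'dw=0$. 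This is the same system as before with $dc$ replaced by $-dG\,w^*$, so $dw=-P\,dG\,w^*$. Using $\vecop(dG\,w^*)=(w^{*\prime}\otimes I_J)\vecop(dG)$, this becomes $-(w^{*\prime}\otimes P)\vecop(dG)$, the second formula in~\eqref{eq:jacobian_w}. Note that the $dG$-derivative does not depend on whether $G$ is symmetrised, provided the same vectorisation is used in $\Sigma_{G,c}$.

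\emph{Step 3: conclusion.} The delta method with $J_w=[\partial w^*/\partial\vecop(G),\ \partial w^*/\partial c]$ and the limit $\mathcal{N}(0,\Sigma_{G,c})$ from Lemma~\ref{lemma:gram} yield $\sqrt n(\hat w-w^*)\wto\mathcal{N}(0,J_w\Sigma_{G,c}J_w')$.

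The main obstacle is Step 1's centring. This is where Assumption~\ref{ass:span} is essential: without it $\hat w$ converges to a pseudo-true value, not to $w^*$. It also requires that the grid version of the population Gram matrix, rather than the integral in Assumption~\ref{ass:gram}, be positive definite. I would handle this by noting that the grid is fixed and by assuming, or arguing via the Riemann approximation, that this grid version is positive definite as well. The remaining steps are routine matrix calculus.
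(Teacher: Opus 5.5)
Your proposal is correct and follows the paper's proof. You write $\hat w$ as the closed-form map of $(\hat G,\hat c)$, apply the delta method with Lemma~\ref{lemma:gram}, and obtain the Jacobian by perturbing $Gw-c+\mu\mathbf{1}=0$, $\mathbf{1}'w=1$; this is equivalent to the paper's direct differentiation of the closed form, which also differentiates $\mathbf{1}'w^*=1$ to pin down $\partial\nu/\partial G_{kl}$ and arrives at $-w^*_l P e_k$.

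Your centring check and your grid-versus-integral caveat are points the paper leaves implicit. The paper simply takes $w^*=w^*(G^*,c^*)$, which is why the statement needs only Assumptions~\ref{ass:sampling}--\ref{ass:gram}; Assumption~\ref{ass:span} enters only to identify that limit with the balancing weights.
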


\section{Proofs}
\label{app:proofs}
%% ---------------------------------------------------------------

\textit{Proof of Theorem~\ref{thm:main}.} \textit{Part (a).} Under Assumption~\ref{ass:span} (perfect pre-treatment balance), decompose
\begin{equation*}
\hatthetazerot{t}(y)-\thetazerot{t}(y)=\underbrace{\sum_{i=2}^{J+1}w_{i}^{*}(\hat{\theta}_{i,t}(y)-\theta_{i,t}(y))}_{=:A_t(y)}+\underbrace{\sum_{i=2}^{J+1}(\hat{w}_{i}-w_{i}^{*})\hat{\theta}_{i,t}(y)}_{=:B_t(y)}.
\end{equation*}
For term $A_t$: by Lemma~\ref{lem:DRconv}, $\sqrt{n}A_t(\cdot)\rightsquigarrow\sum_{i}w_{i}^{*}\mathbb{H}_{i,t}(\cdot)$ in $l^{\infty}(\calY_0)^{p}$. For term $B_t$: since $\hat{\theta}_{i,t}(y)\rightarrow\theta_{i,t}(y)$ uniformly in $y$, we have $\sqrt{n}B_t(y) = \Theta_t(y)' \sqrt{n}(\hat{w}-w^*) + o_p(1)$. Both $A_t$ and $B_t$ are $O_{p}(n^{-1/2})$, so neither is negligible. Crucially, $A_t$ depends on the post-period estimators $\mathbb{H}_{i,t}$ while $\sqrt{n}(\hat{w}-w^{*})$ depends on the pre-period estimators $\mathbb{H}_{i,t}$, $t=1,\ldots,T_0$. By Assumption~\ref{ass:sampling} these are independent, so the cross-covariance between $A_t$ and $B_t$ vanishes. The covariance \eqref{eq:cov0} then follows by direct computation, using $\mathrm{Var}(\sqrt{n}(\hat w - w^*)) \pto V_w$.

\textit{Parts (b) and (c).}
Define the map $\phi_x : l^\infty(\calY_0)^p \times l^\infty(\calY_0)^p \to l^\infty(\calY_0)$ by
$\phi_x(\theta_1, \theta_0)(y) = \Lambda(x'\theta_1(y)) - \Lambda(x'\theta_0(y))$.
We verify Hadamard differentiability of $\phi_x$ at $(\theta_{1,t}, \thetazerot{t})$
tangentially to $C(\calY_0)^p \times C(\calY_0)^p$.
Let $t_n \downarrow 0$ and $(h_n, h_n^0) \to (h, h^0)$ in $l^\infty(\calY_0)^p$. By the mean value theorem,
\[
  \frac{\Lambda(x'(\theta_{1,t} + t_n h_n)(y)) - \Lambda(x'\theta_{1,t}(y))}{t_n}
  = \lambda(x'\theta_{1,t}(y) + s_n(y)\, t_n\, x'h_n(y))\, x'h_n(y)
\]
for some $s_n(y) \in (0,1)$. Since $\lambda$ is uniformly continuous on compact sets,
$\|h_n\|_\infty$ is bounded and $t_n \to 0$, this converges uniformly in $y$ to
$\lambda(x'\theta_{1,t}(y))\, x'h(y)$. The same argument applies to the $\theta_0$-component,
giving the Hadamard derivative
$d\phi_x(h,h^0)(y) = \lambda(x'\theta_{1,t}(y))\,x'h(y) - \lambda(x'\thetazerot{t}(y))\,x'h^0(y)$.
Part~(b) now follows from Part~(a) and the functional delta method
(\citealp{VdVWellner1996}, Theorem~3.9.4) applied to $\phi_x$, using
$\sqrt{n}(\hatthetazerot{t}(\cdot) - \thetazerot{t}(\cdot), \hat\theta_{1,t}(\cdot) - \theta_{1,t}(\cdot)) \rightsquigarrow
(\mathbb{G}^0_t, \mathbb{H}_{1,t})$.

For Part~(c), $f_t(x)$ is treated as the Riemann sum $\frac{1}{m}\sum_l\Delta_t(y_l\mid x)^2$ over the fixed grid $\calY_m$, consistently with $\hat f_t(x)$. The functional $\phi^{(2)}_x : (\theta_1,\theta_0) \mapsto
\frac{1}{m}\sum_l [\phi_x(\theta_1,\theta_0)(y_l)]^2$ is Hadamard differentiable
as a composition $\phi^{(2)}_x = \psi \circ \phi_x$, where
$\psi : g \mapsto \frac{1}{m}\sum_l g(y_l)^2$ is continuously differentiable.
The chain rule (\citealp{VdVWellner1996}, Lemma~3.9.3) and the delta method then give
$\sqrt{n}(\hat f_t(x) - f_t(x)) \dto \mathcal{N}(0,\sigma^2_t(x))$ with
$\sigma^2_t(x)$ as stated. \qed

\textit{Proof of Corollary~\ref{cor:orthogonality}.}
Under Assumption~\ref{ass:ptp}, $\thetazerot{t}(y)=\sum_i w_i^*\theta_{i,t}(y)+r_n(y)$,
so the counterfactual estimator satisfies
$\hatthetazerot{t}(y)\pto\sum_i w_i^*\theta_{i,t}(y)=\thetazerot{t}(y)$.
Decompose $\sqrt{n}(\hat f_t(x)-f_t(x))=\sqrt{n}(\hat f_t(x)-\tilde f_t(x))+\sqrt{n}(\tilde f_t(x)-f_t(x))$,
where $\tilde f_t(x):=\int_{\calY_0}[\Lambda(x'\theta_{1,t})-\Lambda(x'\sum_i w_i^*\theta_{i,t})]^2\,dy$.
The first term $\sqrt{n}(\hat f_t(x)-\tilde f_t(x))\dto\mathcal{N}(0,\sigma^2_t(x))$ by the
same delta-method argument as in Theorem~\ref{thm:main}(c), provided $\tilde f_t(x)>0$
(which follows from $f_t(x)>0$ and $\tilde f_t(x)\to f_t(x)$ under the stated conditions).
For the second term, a second-order Taylor expansion gives \eqref{eq:bias_decomp}.
Under condition (a) or (b), $\sqrt{n}(\tilde f_t(x)-f_t(x))\to 0$, and the result
follows by Slutsky's theorem. \qed

\textit{Proof of Theorem~\ref{thm:weights}.}
Consider the closed-form weight mapping $w^*(G,c) = G^{-1}c - G^{-1}\mathbf{1}\,\nu(G,c)$, with scalar multiplier
\[
  \nu(G,c) = \frac{\mathbf{1}'G^{-1}c - 1}{\mathbf{1}'G^{-1}\mathbf{1}}.
\]
We establish the Jacobian of this mapping at $(G^*,c^*)$, using $w^*=G^{-1}(c-\mathbf{1}\nu)$ and $\mathbf{1}'w^*=1$.

\textbf{Derivative with respect to $c$:}
Since $\partial\nu/\partial c=(\mathbf{1}'G^{-1}\mathbf{1})^{-1}\mathbf{1}'G^{-1}$,
\begin{equation*}
  \frac{\partial w^*}{\partial c} = G^{-1} - G^{-1}\mathbf{1}\,\frac{\partial \nu}{\partial c}
  = G^{-1} - \frac{G^{-1}\mathbf{1}\mathbf{1}'G^{-1}}{\mathbf{1}'G^{-1}\mathbf{1}}.
\end{equation*}
We write $P$ for this matrix, the oblique projection onto $\{u:\mathbf{1}'u=0\}$ along $G^{-1}\mathbf{1}$.

\textbf{Derivative with respect to $G$:}
With $\partial G^{-1}/\partial G_{kl}=-G^{-1}e_ke_l'G^{-1}$ ($e_k$ the $k$-th unit vector) and $G^{-1}(c-\mathbf{1}\nu)=w^*$,
\begin{equation*}
  \frac{\partial w^*}{\partial G_{kl}}
  = -\,G^{-1}e_ke_l'w^* - G^{-1}\mathbf{1}\,\frac{\partial \nu}{\partial G_{kl}} .
\end{equation*}
Differentiating the constraint $\mathbf{1}'w^*=1$ gives $\mathbf{1}'\,\partial w^*/\partial G_{kl}=0$, whence $\partial\nu/\partial G_{kl}=-(\mathbf{1}'G^{-1}\mathbf{1})^{-1}\mathbf{1}'G^{-1}e_ke_l'w^*$. Substituting and using the definition of $P$,
\begin{equation*}
  \frac{\partial w^*}{\partial G_{kl}} = -\,P\,e_ke_l'w^* = -\,w^*_l\,P e_k ,
\end{equation*}
so, stacking the columns (column-major $\mathrm{vec}$),
\[
  \frac{\partial w^*}{\partial \mathrm{vec}(G)} = -\,w^{*\prime}\otimes P
  = -\,w^{*\prime}\otimes\frac{\partial w^*}{\partial c},
\]
which is~\eqref{eq:jacobian_w}.
Collecting terms gives the joint Jacobian matrix $J_w$. By Lemma~\ref{lemma:gram}, $\sqrt{n}(\hat{G} - G^*, \hat{c} - c^*)$ converges weakly to $\mathcal{N}(0, \Sigma_{G,c})$. It follows from the delta method that $\sqrt{n}(\hat{w} - w^*) \dto \mathcal{N}(0, V_w)$ with $V_w = J_w \Sigma_{G,c} J_w'$. \qed

\end{document}